\newtheorem{lemma}{Lemma}
\newtheorem{prop}{Proposition}
\newtheorem{thm}{Theorem}
\newtheorem{cor}{Corollary}
\newtheorem{defn}{Definition}
\newtheorem{example}{Example}
\DeclareFontFamily{U}{anothermathbb}{}
\DeclareFontShape{U}{anothermathbb}{m}{n}{<-> fourier-bb}{}
\DeclareSymbolFont{anothermathbb}{U}{anothermathbb}{m}{n}
\DeclareSymbolFontAlphabet{\anothermathbb}{anothermathbb}
\DeclareMathOperator*{\argmax}{arg\,max}
\title{Generalizing matrix representations to fully heterochronous ranked tree shapes}
\author[1,6]{Chris Jennings-Shaffer}
\author[2]{Ziyue (Cherith) Chen}
\author[2,3]{Julia A Palacios\textsuperscript{*}}
\author[1,4,5,6]{Frederick A Matsen IV\textsuperscript{*}}
\affil[1]{Fred Hutchinson Cancer Research Center, Seattle, Washington, USA}
\affil[2]{Department of Statistics, Stanford University, Stanford, CA}
\affil[3]{Department of Biomedical Data Science, Stanford School of Medicine, Stanford, CA}
\affil[4]{Department of Genome Sciences, University of Washington, Seattle, WA}
\affil[5]{Department of Statistics, University of Washington, Seattle, WA}
\affil[6]{Howard Hughes Medical Institute, Seattle, WA}
\affil[*]{Co-corresponding authors: \texttt{juliapr@stanford.edu} and \texttt{matsen@fredhutch.org}}
\date{}
\begin{document}
\maketitle

\begin{abstract}
  Phylogenetic tree shapes capture fundamental signatures of evolution.
  We consider ``ranked'' tree shapes, which are equipped with a total order on the internal nodes compatible with the tree graph.
  Recent work has established an elegant bijection between ranked tree shapes and a class of integer matrices, called \textbf{F}-matrices, defined by simple inequalities.
  This formulation is for isochronous ranked tree shapes, where all leaves share the same sampling time, such as in the study of ancient human demography from present-day individuals.
  However, branch lengths of phylogenetic trees can represent units other than calendar time, such as evolutionary distance.
  A tree equipped with branch lengths quantifying evolutionary distance, called a rooted phylogram, is output by popular maximum-likelihood methods.
  These trees are broadly relevant, such as to study the affinity maturation of B cells in the immune system.
  Discretizing time in a rooted phylogram gives a fully heterochronous ranked tree shape, where leaves are part of the total order.
  Here we extend the \textbf{F}-matrix framework to such fully heterochronous ranked tree shapes.
  We establish an explicit bijection between a class of \textbf{F}-matrices and the space of such tree shapes.
  The matrix representation has the key feature that the value at any entry is highly constrained by four previous entries, enabling straightforward enumeration of all valid tree shapes.
  We also use this framework to develop probabilistic models on ranked tree shapes.
  Our work extends understanding of combinatorial objects that have a rich history in the literature.
\end{abstract}

\section*{Introduction}

Evolution is the unifying theme of biology, and it operates in diverse modes.
These modes can be seen in the structure of phylogenetic trees~\citep{MooersHeard}.
For example, the tree of influenza has a highly ``imbalanced'' shape, which comes from intense evolutionary selective pressure from host immunity, in contrast with the trees of other viruses~\cite{Grenfell2004-dz}.
Scientists characterize these modes of evolution by studying phylogenetic tree ``shapes'': rooted bifurcating tree graphs without leaf labels.

An elegant means of characterizing tree shapes has recently been developed that includes information about the relative ordering of nodes in addition to graph structure~\cite{Kim2020-ip,Samyak2024-yo}.
This relative ordering is expressed as a ranking, i.e., a total ordering of the internal nodes of the tree.
The combination of tree shape and relative ordering defines a ``ranked tree shape.''
There is a bijection between such ranked tree shapes and a class of integer-valued matrices, called ``\textbf{F}-matrices'', characterized by simply-expressed inequalities~\cite{Kim2020-ip,Samyak2024-yo}.
By recording information about the order of events on the tree, this formulation enables richer comparison than tree structure alone.
However, the existing formulation of \textbf{F}-matrices is limited to ``isochronous'' ranked tree shapes (Figure~\ref{fig:treeTypes}, left) in which all the leaves of the tree are assumed to have been sampled at the same time, or at least at known fixed sampling times.
This makes perfect sense in the setting of ``time trees'' (a.k.a. chronograms): phylogenetic trees with nodes labeled by calendar time and leaf nodes representing molecular sequences with known sampling times.
Such trees result from inference done using software such as BEAST~\cite{Drummond2007-co, Bouckaert2019-wv, Baele2025-fe} or TreeTime~\cite{Sagulenko2018-xl}.

\begin{figure}[h!]
  \centering
  \includegraphics[width=0.9\textwidth]{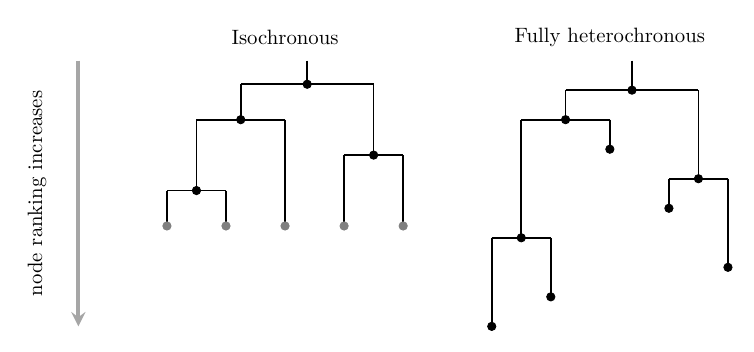}
  \caption{%
    Left: an isochronous tree shape; an \textbf{F}-matrix bijection has been established for such objects \cite{Kim2020-ip,Samyak2024-yo}.
    Right: a fully heterochronous tree shape; the present manuscript establishes an analogous bijection between these objects and a class of \textbf{F}-matrices.
    The two trees are isomorphic as graphs, but are not the same type of ranked tree shape.
    On the left isochronous tree, internal nodes have unique ranks and leaves share a common rank.
    We mark leaves in gray to indicate that they do not form part of the ``data'' encoded by the ranked tree.
    On the right fully heterochronous tree, all nodes have unique ranks and the rank of a leaf may be less than the rank of an internal node.
  }
  \label{fig:treeTypes}
\end{figure}

There is another type of tree analysis that simply represents the phylogenetic tree without timing constraints, letting the length of each edge represent the amount of evolution that has happened along that edge.
This structure is called a ``rooted phylogram''.
Phylograms are the inferential output of software such as IQ-TREE~\cite{Minh2020-va} and RAxML~\cite{Stamatakis2014-xo}.

One may wish to use rooted phylograms to study patterns of evolution in systems where dates are not available or relevant.
For example, in B cell affinity maturation, the evolutionary structure of the phylogenetic tree is determined by a relatively short period in the germinal center, after which the resulting cells circulate for longer as memory B cells without further mutation~\cite{Victora2022-fm}.
Due to this two-part process, the blood sampling time for B cells is not relevant to the actual ``sampling time'' of the B cells, which corresponds to the various times when they left the germinal center.
Hence, one can use a rooted phylogram.

Crucially, the leaf positions of rooted phylogram inference form part of the inferential \emph{output}, in contrast to time tree inference (for which they form part of the \emph{input} data).
Thus, we wish to capture the positions of the leaf nodes as part of our tree representation.
Consequently, we discretize the positions of all nodes into a total ranking, obtaining what we call a \emph{fully heterochronous ranked tree shape} (Figure~\ref{fig:treeTypes}, right).

In this paper, we extend the previous \textbf{F}-matrix characterization to fully heterochronous ranked tree shapes and additionally prove theorems for matrix construction.
Taking this further, we provide a method to iteratively construct all \textbf{F}-matrices one entry at a time.
This extends previous work in two ways.
First, in \citep{Kim2020-ip}, the authors proposed a bijective \textbf{F}-matrix encoding of ranked tree shapes (isochronous and heterochronous) to define a distance on the space via matrix norms.
This defined \textbf{F}-matrices by the ranked tree shapes that they encode, while the set of matrices comprising \textbf{F}-matrices was explicitly identified for isochronous ranked tree shapes only \citep{Samyak2024-yo}; we restate the latter result as Theorem \ref{theorem.isochronous_f_matrix_tree}.
Second, an iterative construction was noted for the isochronous case, but it was not explicitly stated nor was its correctness proved.
As we will show, in both the isochronous and heterochronous case, \textbf{F}-matrices are classified by their entries satisfying a number of simple linear inequalities.
The proposed iterative matrix construction is a method to solve the linear inequalities without the need for back-substitution.
This construction also yields an explicit enumeration of all \textbf{F}-matrices, and as such all fully heterochronous ranked tree shapes. Additionally, this construction facilitates the definition of novel probabilistic models.

Probability distributions on trees are important in phylogenetics, such as for maximum likelihood estimation and as priors in Bayesian inference. 
Examples of widely used probability distributions on isochronous phylogenetic trees include Kingman's coalescent model on ranked labeled tree shapes and Tajima's coalescent model on ranked unlabeled tree shape \citep{kingman1982coalescent, Tajima1983}. 
The current literature lacks descriptive probability distributions on the space of fully heterochronous ranked tree shapes beyond birth-death models that assume a uniform tree topology. 
To address this, we first introduce two parameter-free models: a backward-in-time coalescent model \citep{kingman1982coalescent}, and a forward-in-time model referred to here as diagonal top-down.
These may be considered as null models.
In the opposite direction, we exploit the iterative matrix construction to define highly flexible probability distributions with many free parameters on the space of fully heterochronous ranked tree shapes.
This general construction can be specialized to a particular class of beta-splitting model \citep{aldous1996probability,sainudiin2016beta}.
Future work will focus on fitting these flexible distributions via neural networks.

The remainder of this article is structured as follows.
In Section \ref{section.known} we provide definitions and review connections with the previous literature.
In Section \ref{section.preliminaries} we introduce and provide examples of the types of matrices used here.
In Section \ref{section.theorems} we state and prove theorems for various bijections, classify \textbf{F}-matrices by constraints on their entries, and constructively enumerate \textbf{F}-matrices.
In Section \ref{section.enumeration} we describe two null distributions based on simple sampling schemes for fully heterochronous ranked tree shapes, define a highly flexible non-parametric family of probability distributions on \textbf{F}-matrices, and specialize this to a novel two-parameter family of distributions on ranked tree shapes.
Lastly, in Section \ref{section.discussion} we give a brief discussion of results and directions for the future.

\section{Definitions and connection with previous literature}\label{section.known}
We begin by more formally defining terms and providing connections with previous mathematical literature.
A \emph{fully heterochronous ranked tree shape} is a rooted full binary tree with a total ordering on the nodes such that nodes appear in increasing order along any path from root to leaf (see Example \ref{example.heterochronous}).
Nodes represent events in time and the total ordering is based on time, so no two events (including the sampling of leaves) occur at the same time, hence the term ``fully heterochronous''.
In contrast, an \emph{isochronous ranked tree shape} is a rooted full binary tree with a total ordering on only the internal nodes, but again internal nodes appear in increasing order along any path from root to leaf (see Example \ref{example.isochronous}).
With nodes representing events in time, isochronous ranked tree shapes correspond to different times for all internal nodes and the same time for all leaves.
While one can consider heterochronous ranked tree shapes, where some intermediate number of leaves share ranks, we do not do so in this article.
Table \ref{tab:summary1} shows a summary describing the two types of ranked trees.

\begin{table}[H]
  \small
  \centering
  \begingroup
  \begin{tabular}{|>{\raggedright\arraybackslash}p{0.45\textwidth}|
    >{\raggedright\arraybackslash}p{0.45\textwidth}|}
    \hline
    \textbf{Isochronous} & \textbf{Fully heterochronous } \\
    \hline
    \vspace{-5pt}
    \begin{itemize}[label=-,leftmargin=1.2em,topsep=0pt,partopsep=0pt,parsep=0pt,itemsep=0pt]
      \item Discretized inferential output of e.g.\ BEAST or TreeTime.
      \item Branch lengths are in units of calendar time.
      \item Internal nodes are totally ordered.
      \item Leaf positions are assumed to be equal and known.
    \end{itemize}
    &
    \vspace{-5pt}
    \begin{itemize}[label=-,leftmargin=1.2em,topsep=0pt,partopsep=0pt,parsep=0pt,itemsep=0pt]
      \item Discretized inferential output of e.g.\ IQ-TREE or RAxML.
      \item Branch lengths are in units of evolutionary change.
      \item All nodes are totally ordered.
      \item The leaf positions are part of the inferential output.
    \end{itemize}\\
    \hline
  \end{tabular}
  \endgroup
  \caption{%
  Comparing the two types of ranked tree shapes. }
  \label{tab:summary1}
\end{table}

Ranked tree shapes are related to another type of tree structure known by many names including \textit{binary increasing trees}, \textit{ordered binary trees}, or \textit{Andr\'e trees} \cite{foata1971,donaghey1975, poupard1989}.
Ordered (increasing, Andr\'e) binary trees are fully heterochronous ranked tree shapes without the full-binary-tree requirement; such trees have nodes with out-degree at most two instead of out-degree exactly zero or two.
Fully heterochronous ranked tree shapes are also called strictly ordered binary trees.
While isochronous ranked tree shapes are not ordered binary trees, the isochronous ranked tree shapes with $n$ leaves are equinumerous with the ordered binary trees with $n-1$ nodes.
Ordered binary trees are inherently related to alternating permutations that were extensively studied in \cite{andre1881}, which is why some authors call such trees Andr\'e trees.
In the phylogenetics literature, \cite{Gavryushkina2013-eq} developed efficient algorithms for counting fully heterochronous ranked tree shapes using a coalescent-based recursion, and extended this enumeration to serial sampling scenarios where samples are collected at multiple time points.

Let $\mathcal{T}_n$ denote the set of isochronous ranked tree shapes with $n$ leaves and $\mathcal{T}^*_n$ denote the set of fully heterochronous ranked tree shapes with $n$ leaves.
The cardinalities of these sets correspond to the so-called Euler up/down (or zigzag) numbers and reduced tangent numbers \cite{OEIS-iso, OEIS-hetero}.
In particular, in terms of exponential generating functions, we have
\begin{gather*}
  \sum_{n=0}^\infty \frac{|\mathcal{T}_{n+1}|}{n!} x^n
  =
  \sec(x) + \tan(x)
  ,\qquad
  \sum_{n=1}^\infty \frac{|\mathcal{T}^*_{n}| x^{2n}}{(2n)!}
  =
  2\log\left( \sec\left(\frac{x}{\sqrt{2}}\right) \right)
  .
\end{gather*}
Furthermore,
\begin{gather} \label{eq.cardinality}
  |\mathcal{T}_{n}| = 2^{n-1}\left|E_{n-1}\left(\tfrac12\right) - E_{n-1}(0)\right|
  ,\qquad
  |\mathcal{T}^*_{n}| = 2^n\left(2^{2n}-1\right)\frac{|B_{2n}|}{n},
\end{gather}
where $E_n(x)$ are the Euler polynomials (note $E_n(\frac12)=0$ for odd $n$ and $E_n(0)=0$ for even $n$) and $B_n$ are the Bernoulli numbers.

\section{Preliminaries}\label{section.preliminaries}

For a ranked tree shape we define three types of matrices, which we call \textbf{F}-, \textbf{D}-, and \textbf{E}-matrices.
The differences between these matrices and their isochronous analogs are minor, and we highlight where differences occur.
One additional difference with previous work is that we will use the convention that indices start at $0$, not $1$, in order to make theorem statements cleaner.
In formulating the matrices, we give a purely graph theoretic definition and then an interpretation where the total ordering is based on events occurring in time, which is relevant for applications and is useful when visualizing such trees.

Throughout this section, we suppose $T$ is a ranked tree shape with $n$ leaves.
Note $T$ has $n-1$ internal nodes.
We label the nodes of $T$ by their ordering and call this label the \emph{rank} of a node.
As a convention for the isochronous case, we label the leaves with the common rank $n-1$ (distinct ranks are provided for leaves in the fully heterochronous case).
The root has rank 0.

The \textbf{F}-matrix associated to $T$ is a lower triangular matrix $F$, where the size of the matrix is $(n-1)\times(n-1)$ in the isochronous case and $(2n-2)\times(2n-2)$ in the fully heterochronous case.

\begin{defn}\label{def.Fij}
	The entry $F_{i,j}$, for $0\leq j \leq i$, is defined as the number of edges from nodes $v$ to nodes $w$, where the rank of $v$ is at most $j$ and the rank of $w$ is larger than $i$.
\end{defn}

The associated \textbf{D} and \textbf{E}-matrices are also lower triangular matrices and of the same size as the \textbf{F}-matrix.
The entry $D_{i,j}$ is defined as the number of edges descending from the node with rank $j$ to nodes with rank larger than $i$.
The entry $E_{i,j}$ is defined as the number of edges from the node with rank $j$ to the node(s) with rank $i+1$.

For a description of these matrices in line with their introduction in \cite{Kim2020-ip, Samyak2024-yo}, we view $T$ as describing a branching and sampling process of lineages over time.
A coalescent or branching event corresponds to an internal node in $T$, while a sampling event corresponds to a leaf in $T$. 
As in previous literature, we take the approach that time moves in the direction of leaf to root (for instance, one might think of the time units being in millions of years ago).
In the isochronous case, we take real numbers $u_0>u_1>\dotsb>u_{n-1}=0$ and say the event for the node(s) with rank $i$ occurs exactly at time $u_i$.
In the fully heterochronous case we instead take real numbers $u_0>u_1>\dotsb>u_{2n-2}$, as there are more ranked nodes in this case.
No event occurs in any time interval $(u_i,u_{i+1})$.
Exactly one event occurs at each time $u_i$, except for time $u_{n-1}$ in the isochronous case. 

In this setting, entry $(i,j)$ of the \textbf{F}-matrix is the number of lineages present for the entire time interval $(u_{i+1},u_j)$.
A lineage is present for a time interval if the lineage appeared at or before the event time $u_j$ and neither bifurcates nor is sampled before the event time $u_{i+1}$.
Similarly, the $(i,j)$ entry of the \textbf{D}-matrix is the number of direct descendants of the lineage appearing at time $u_j$ that are extant at least until time $u_{i+1}$.
Lastly, the $(i,j)$ entry of the \textbf{E}-matrix is the number of direct descendants of the lineage appearing at time $u_j$ that are sampled at time $u_{i+1}$.

\begin{example}\label{example.heterochronous}
  Consider the following fully heterochronous ranked tree shape on three leaves:
  \begin{figure}[h]
    \includegraphics[width=5cm, height=4cm]{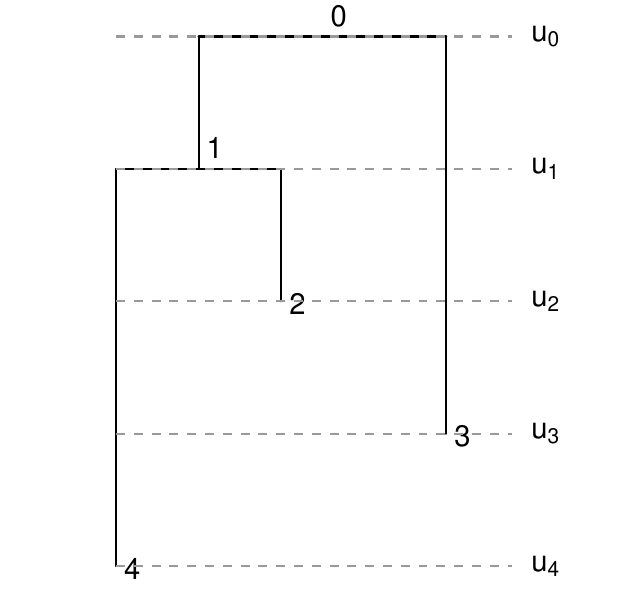}
    \centering
  \end{figure}

  The associated matrices are:
  \begin{gather*}
    F =
    \begin{pmatrix}
      2& 0& 0& 0\\
      1& 3& 0& 0\\
      1& 2& 2& 0\\
      0& 1& 1& 1
    \end{pmatrix}
    ,\quad
    D =
    \begin{pmatrix}
      2& 0& 0& 0\\
      1& 2& 0& 0\\
      1& 1& 0& 0\\
      0& 1& 0& 0
    \end{pmatrix}
    ,\quad
    E=
    \begin{pmatrix}
      1& 0& 0& 0\\
      0& 1& 0& 0\\
      1& 0& 0& 0\\
      0& 1& 0& 0
    \end{pmatrix}
    .
  \end{gather*}

Let us examine the second column of these matrices.
For the \textbf{F}-matrix, we consider the three lineages present just before time $u_1$ (i.e., those below the dashed line labeled $u_1$).
Of these three lineages, all are present by time $u_2$, so $F_{1,1} = 3$; two are present by time $u_3$, so $F_{2,1}=2$; and one is present by time $u_4$, so $F_{3,1}=1$.
For the \textbf{D}-matrix, we consider the two lineages that appear at time $u_1$ (i.e., the lineages with ranks $2$ and $4$).
Of these two lineages, both are present by time $u_2$, so $D_{1,1}=2$; one is present by time $u_3$, so $D_{2,1}=1$; and one is present by time $u_4$, so $D_{3,1}=1$.
For the \textbf{E}-matrix, we again consider the two lineages that appear at time $u_1$.
Of these two lineages, one is sampled at time $u_2$, so $E_{1,1} = 1$; none are sampled at time $u_3$, so $E_{2,1}=0$; and one is sampled at time $u_4$, so $E_{3,1}=1$.
  
\end{example}

\begin{example}\label{example.isochronous}
  Consider the following isochronous ranked tree shape on five leaves:

  \begin{figure}[h]
    \includegraphics[width=5cm, height=4cm]{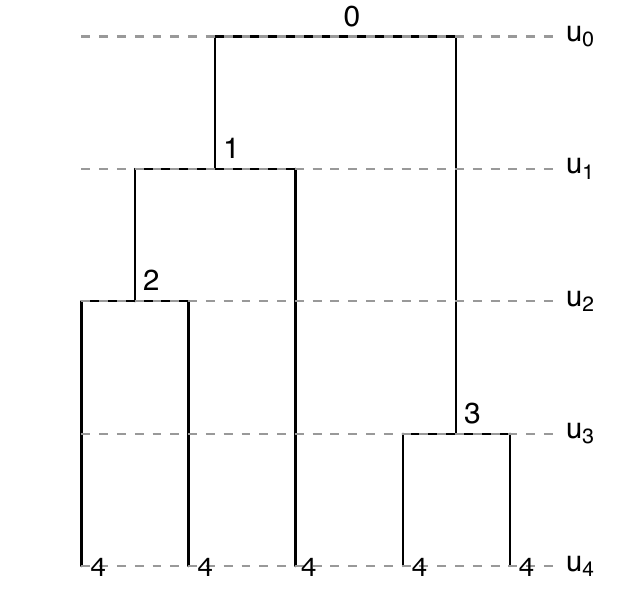}
    \centering
  \end{figure}

  Note that all the leaves have rank 4.
  The associated matrices are:
  \begin{gather*}
    F =
    \begin{pmatrix}
      2& 0& 0& 0\\
      1& 3& 0& 0\\
      1& 2& 4& 0\\
      0& 1& 3& 5
    \end{pmatrix}
    ,\quad
    D =
    \begin{pmatrix}
      2& 0& 0& 0\\
      1& 2& 0& 0\\
      1& 1& 2& 0\\
      0& 1& 2& 2
    \end{pmatrix}
    ,\quad
    E =
    \begin{pmatrix}
      1& 0& 0& 0\\
      0& 1& 0& 0\\
      1& 0& 0& 0\\
      0& 1& 2& 2
    \end{pmatrix}
    .
  \end{gather*}
\end{example}

The entries of such matrices are non-negative integers.
Given that the trees are binary, the entries of a \textbf{D}-matrix are restricted to $\{0,1,2\}$.
In the fully heterochronous case, the entries of an \textbf{E}-matrix are restricted to $\{0,1\}$.
In the isochronous case, the entries of all but the last row of an \textbf{E}-matrix are restricted to $\{0,1\}$, while entries of the last row are restricted to $\{0,1,2\}$ (as the leaves share a common rank).

The \textbf{E}-matrix is related to the adjacency matrix for $T$ as a directed graph.
This is immediate in the fully heterochronous case, where all nodes are uniquely given by their rank, so that $E_{i,j}$, for $0\le j\le i$, is entry $(j,i+1)$ of the adjacency matrix.
In the isochronous case, this is true for all rows of $E$ except the last, where the last row of $E$ is a condensed description of the edges given by the last $n$ columns of the adjacency matrix.
In particular, there is not a unique adjacency matrix for $T$, as the leaves are unlabeled.
By taking any assignment of $n-1,n,\dotsc,2n-2$ as labels for the $n$ leaves we have a different adjacency matrix, but regardless of this choice, $E_{n-2,j}$ is the sum of entries of the adjacency matrix at $(j,n-1), (j,n), \dotsc, (j,2n-2)$.
Due to $T$ being a full binary tree, the information lost going from an adjacency matrix to $E$ is exactly the labeling of leaves.

These types of matrices are related through the equations, for $0\le j\le i$,
\begin{align}\label{eq.matrix_bijections_df_ed}
  D_{i,j} &= F_{i,j} - F_{i,j-1},
  &E_{i,j} &= D_{i,j} - D_{i+1,j},
  \\
  \label{eq.matrix_bijections_fd_de}
  F_{i,j} &= \sum_{\ell=0}^j D_{i,\ell},
  &D_{i,j} &= \sum_{\ell=i}^{2n-3} E_{\ell, j},
\end{align}
with the convention that matrix entries at out of bound indices are 0.
These equations imply a bijection between \textbf{D}-matrices, \textbf{E}-matrices, and \textbf{F}-matrices.

Given the relation between an \textbf{E}-matrix and an adjacency matrix, along with the bijections \eqref{eq.matrix_bijections_df_ed} and \eqref{eq.matrix_bijections_fd_de}, it is clear that \textbf{F}-matrices are in bijection with the ranked tree shapes they represent.
However, it is not apparent how the entries of an \textbf{F}-matrix are constrained or how to tell if a given matrix is an \textbf{F}-matrix.
In the isochronous case, the conditions on entries are known by previous work, which we restate in the following theorem with our notational conventions.

\begin{thm}\cite{Kim2020-ip, Samyak2024-yo}
  \label{theorem.isochronous_f_matrix_tree}
  The space of isochronous ranked tree shapes with $n$ leaves is in bijection with the space of $(n-1) \times (n-1)$ \textbf{F}-matrices, which are lower triangular square matrices of nonnegative integers that obey the following constraints.
  \begin{enumerate}
    \item Entries of rows are monotone increasing:
      \begin{gather*}
        F_{i,j-1}\le F_{i,j} \qquad\qquad\text{for }1\le j\le i\le n-2.
      \end{gather*}
    \item Entries of columns are monotone decreasing with difference at most $1$:
      \begin{gather*}
        F_{i-1,j} - 1 \leq F_{i,j} \leq F_{i-1,j} \qquad\qquad\text{for }0\le j < i \le n-2.
      \end{gather*}
    \item Entries satisfy an additional constraint based on their position in the matrix:
      \begin{enumerate}[label=(\alph*)]
        \item The diagonal elements are $F_{i,i}=i+2$.
        \item The subdiagonal elements are $F_{i,i-1} = i$ for $1\le i\le n-2$.
        \item Of the remaining elements, $F_{i,j}$  for $2\le i \le n-2$ and $1\le j \le i-2$, satisfy the inequality
          \begin{gather*}
            F_{i,j-1} + F_{i-1,j} - F_{i-1,j-1} - 1 \leq F_{i,j} \leq F_{i,j-1} + F_{i-1,j} - F_{i-1,j-1}.
          \end{gather*}
      \end{enumerate}
  \end{enumerate}
\end{thm}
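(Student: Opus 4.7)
The plan is to prove the bijection in two directions: first, that the $\textbf{F}$-matrix of any isochronous ranked tree shape on $n$ leaves satisfies conditions (1)--(3); and second, that every lower-triangular integer matrix obeying these constraints arises as the $\textbf{F}$-matrix of a unique such tree. Since the excerpt already notes that $\textbf{F}$, $\textbf{D}$, and $\textbf{E}$ matrices are in bijection via \eqref{eq.matrix_bijections_df_ed}--\eqref{eq.matrix_bijections_fd_de}, and that $\textbf{E}$ encodes (a condensed form of) the adjacency matrix, the natural route is through $\textbf{E}$: showing that the constraints on $F$ force the derived $E$ to be the parent-pointer table of a valid full binary ranked tree.

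For the forward direction I would use the time-based interpretation, in which $F_{i,j}$ counts lineages present throughout the interval $(u_{i+1}, u_j)$. Condition (a) follows from the observation that just after the bifurcation at rank $i$ (reading root-to-leaf in rank order) there are $i+2$ active lineages, which are precisely the ones present throughout $(u_{i+1}, u_i)$. Condition (b) is analogous: among the $i+1$ lineages in $(u_i, u_{i-1})$, exactly one bifurcates at rank $i$, leaving $i$ that persist throughout $(u_{i+1}, u_{i-1})$. Row monotonicity in (1) is immediate, since a lineage present throughout a wider interval is present in any narrower subinterval. Condition (c) becomes transparent after translation to $\textbf{D}$-form, where it reads $D_{i-1,j} - 1 \le D_{i,j} \le D_{i-1,j}$, i.e., $E_{i-1,j} \in \{0,1\}$; this holds because $E_{i-1,j}$ counts the children of node $j$ of rank exactly $i$, and for $i \le n-2$ at most one node has rank $i$, so at most one such child exists.

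For the backward direction, given $F$ satisfying (1)--(3), I would define $D$ and $E$ through \eqref{eq.matrix_bijections_df_ed} and verify that $E$ is the parent-pointer table of a valid full binary ranked tree. Two checks carry the argument: (i) $E_{i,j} \in \{0,1\}$ for every $i \le n-3$, with row $n-2$ allowed to lie in $\{0,1,2\}$; and (ii) the row sums match the tree structure, namely $\sum_{j=0}^{i} E_{i,j} = 1$ for $i<n-2$ and $=n$ for $i = n-2$. Check (ii) is a one-line telescoping, since $\sum_{j} E_{i,j} = F_{i,i} - F_{i+1,i}$ evaluates to $(i+2)-(i+1)=1$ by (a)--(b), and to $n$ at $i=n-2$. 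With these in hand, I would construct a tree by declaring each internal node of rank $k\in\{1,\dots,n-2\}$ to be the child of the unique $j^*$ with $E_{k-1,j^*}=1$, and assigning the $n$ leaves to their parents according to the positive entries of row $n-2$. Each internal node $j$ has exactly two children because $\sum_{i \ge j} E_{i,j} = D_{j,j} = F_{j,j}-F_{j,j-1} = 2$ by (a)--(b), yielding a full binary tree with ranks increasing from root to leaf; by construction its $\textbf{F}$-matrix is $F$, and uniqueness follows from the bijection between $F$ and $E$.

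The main obstacle will be the boundary bookkeeping in check (i), where inequality (c) is not asserted: for $j=0$, the column constraint (2) alone gives $E_{i,0} = F_{i,0} - F_{i+1,0} \in \{0,1\}$; for $j=i$, I would combine $D_{i,i}=2$ with the lower bound $F_{i+1, i-1} \ge i-1$ (obtained by applying (2) to the subdiagonal entry $F_{i,i-1}=i$) to get $D_{i+1,i}\in\{1,2\}$, hence $E_{i,i}\in\{0,1\}$. Similar considerations cover the $\{0,1,2\}$ bound in row $n-2$, where entries may exceed $1$ because all leaves share rank $n-1$; there the required bound follows from chaining the column/constraint-(c) inequality to show $D_{n-2,j} \le D_{j,j} = 2$.
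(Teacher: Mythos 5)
Your argument is correct. The paper only cites this theorem without supplying a proof, but your route --- necessity via the lineage-counting interpretation of $F_{i,j}$, sufficiency by deriving $E$ from $F$ and checking that entries lie in $\{0,1\}$ (with $\{0,1,2\}$ in the last row), that row sums telescope to $F_{i,i}-F_{i+1,i}$, and that column sums equal $D_{j,j}=2$ before reading off the tree --- is essentially identical to the paper's own proof of the heterochronous analog (Theorem~\ref{theorem.heterochronous_f_matrix_tree}), with the correct additional boundary bookkeeping for $j=0$, $j=i$, and the shared leaf rank in row $n-2$.
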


A consequence of Theorem~\ref{theorem.isochronous_f_matrix_tree} is that it allows us to enumerate the whole space of isochronous ranked tree shapes, with a fixed number of leaves, via $\mathbf{F}$-matrices.
The values of the diagonal and subdiagonal entries are common to all $\mathbf{F}$-matrices.
The $\mathbf{F}$-matrices are enumerated by then selecting values for the remaining lower diagonal entries in lexicographical order, which is the order of rows then columns (see Example \ref{ex.isochronous_fill}).
As it turns out, selecting values in this order not only produces all \textbf{F}-matrices, but also never produces an invalid matrix.
That is to say, setting $F_{i,j}$ to either $\min\left(F_{i-1,j}, F_{i,j-1}+F_{i-1,j}-F_{i-1,j-1}\right)$ or
$\max\left(F_{i,j-1}, F_{i-1,j}-1, F_{i,j-1}+F_{i-1,j}-F_{i-1,j-1}-1\right)$
never yields an unsatisfiable system of inequalities for entries filled after $F_{i,j}$.

The simplicity of this result motivates the use of \textbf{F}-matrices over \textbf{D}- or \textbf{E}-matrices.
We state the corresponding theorem for fully heterochronous ranked tree shapes in the next section, however in this case, the enumeration method is not immediate.

\section{Theorems}\label{section.theorems}

We note that the $\mathbf{F}$-matrix of a fully heterochronous ranked tree with $n$ leaves is a matrix of dimension $(2n-2) \times (2n-2)$ with a different constraint on the diagonal from the isochronous case.
Recall the $i$-th diagonal entry indicates the number of lineages (or edges) extant at the $i$-th time epoch and so the diagonal entries either increase by one or decrease by one depending on whether the $i$-th node is of out-degree 2 or of out-degree 0.
In the isochronous case, diagonal entries always increase by one.
In the following theorem, we classify $\textbf{F}$-matrices of fully heterochronous ranked tree shapes in terms of a system of inequalities.

\begin{thm}\label{theorem.heterochronous_f_matrix_tree}
  The space of fully heterochronous ranked tree shapes with $n$ leaves is in bijection with the space of $(2n-2) \times (2n-2)$ \textbf{F}-matrices, which are the lower triangular square matrices $F$ of non-negative integers that obey the following constraints.
  \begin{enumerate}
    \item\label{item.row} Entries of rows are monotone increasing:
      \begin{gather*}
        F_{i,j-1}\le F_{i,j} \qquad\qquad\text{for }1\le j\le i\le 2n-3.
      \end{gather*}
    \item\label{item.column} Entries of columns are monotone decreasing with difference at most $1$:
      \begin{gather*}
        F_{i-1,j} - 1 \leq F_{i,j} \leq F_{i-1,j} \qquad\qquad\text{for }0\le j < i \le 2n-3.
      \end{gather*}
    \item\label{item.positional} Entries satisfy an additional constraint based on their position in the matrix:
      \begin{enumerate}[label=(\alph*)]
        \item\label{item.diagonal} The diagonal elements are positive and satisfy,
          \begin{align*}
            F_{0,0} &= 2,
            \\
            F_{i,i} &= F_{i-1,i-1} \pm 1 \qquad\qquad\text{for } 0<i<2n-3,
            \\
            F_{2n-3,2n-3} &= 1.
          \end{align*}
          In particular, $F_{i,i}=F_{i-1,i-1}-1$ if the $i$-th event is a sampling event, and $F_{i,i}=F_{i-1,i-1}+1$ if it is a coalescent event.
        \item\label{item.subdiagonal} The subdiagonal elements are $F_{i,i-1} = F_{i-1,i-1}-1$ for $1\le i\le 2n-3$.
        \item\label{item.remaining} Of the remaining elements, $F_{i,j}$ for $2\le i \le 2n-3$ and $1\le j \le i-2$, satisfy the inequality
          \begin{gather*}
            F_{i,j-1} + F_{i-1,j} - F_{i-1,j-1} - 1 \leq F_{i,j} \leq F_{i,j-1} + F_{i-1,j} - F_{i-1,j-1}.
          \end{gather*}
      \end{enumerate}
  \end{enumerate}
\end{thm}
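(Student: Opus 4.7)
The plan is to prove both directions of the claimed bijection by working through the triangle of bijections $F \leftrightarrow D \leftrightarrow E$ established by equations~\eqref{eq.matrix_bijections_df_ed} and~\eqref{eq.matrix_bijections_fd_de}, together with the interpretation of $E$ as (essentially) the adjacency matrix of the ranked tree. Injectivity of $T \mapsto F(T)$ is immediate from these bijections and the fact that, unlike the isochronous case, every node now has a unique rank so no information is lost in passing from the adjacency matrix to $E$. It therefore suffices to (i) verify that $F(T)$ satisfies Constraints 1--3 for every fully heterochronous ranked tree shape $T$ with $n$ leaves, and (ii) show that every matrix satisfying those constraints is $F(T)$ for some such $T$.

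For the forward direction I would exploit the ``lineage'' interpretation: $F_{i,j}$ is the number of edges from rank-$\le j$ nodes to rank-$>i$ nodes, equivalently the number of lineages extant throughout the epoch $(u_{i+1},u_j)$. Row monotonicity is then the trivial set-inclusion statement that widening the pool of source ranks cannot remove edges. For the column inequality, $F_{i-1,j}-F_{i,j}$ counts edges whose target has rank exactly $i$; since the heterochronous assumption says only one node has rank $i$ and it has at most one parent, this drop is $0$ or $1$. The diagonal identity $F_{0,0}=2$ is the root's out-degree, $F_{2n-3,2n-3}=1$ is the single edge into the uniquely highest-ranked leaf, and the $\pm 1$ recurrence is a bookkeeping step: going from epoch $i-1$ to epoch $i$, we lose the single incoming edge to node $i$ and gain its out-degree ($0$ if sampling, $2$ if coalescent). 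The subdiagonal identity $F_{i,i-1}=F_{i-1,i-1}-1$ is analogous using that node $i$ has exactly one parent (of some rank $\le i-1$). Finally, Constraint~\ref{item.remaining} follows from rewriting it as $D_{i,j}\in\{D_{i-1,j},D_{i-1,j}-1\}$, which is the assertion $E_{i-1,j}\in\{0,1\}$ --- true because every non-root has a unique parent.

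For the reverse direction, given $F$ satisfying Constraints 1--3, define $D$ and $E$ by~\eqref{eq.matrix_bijections_df_ed}. Row monotonicity gives $D_{i,j}\ge 0$, and Constraint~\ref{item.remaining} rewritten as above forces $E_{i,j}\in\{0,1\}$ and $D_{i,j}\le D_{i-1,j}$, so $D_{i,j}\le D_{0,0}=F_{0,0}=2$. I would then define a candidate ranked tree by declaring, for $j<i$, that there is an edge from rank $j$ to rank $i$ iff $E_{i-1,j}=1$, and classifying rank $i$ as internal (resp.\ leaf) according to whether $F_{i,i}=F_{i-1,i-1}+1$ or $-1$. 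The verifications are: (a) every rank $i\ge 1$ has exactly one parent, i.e.\ $\sum_{j\le i-1}E_{i-1,j}=1$, which telescopes to $F_{i-1,i-1}-F_{i,i-1}=1$ (Constraint~\ref{item.subdiagonal}); (b) every rank classified as internal has out-degree $2$ and every leaf has out-degree $0$, which follows from summing $E_{\ell,j}$ over $\ell\ge j$ to recover $D_{j,j}$, combined with the diagonal identities giving $D_{j,j}=2$ or $0$ accordingly; and (c) the number of leaves is $n$, which follows from the telescoping $\sum_i (F_{i,i}-F_{i-1,i-1})=F_{2n-3,2n-3}-F_{0,0}=-1$, forcing $n-1$ coalescences and $n$ samplings among the $2n-2$ ranks (the root counts as the $n-1$st coalescence).

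The main obstacle, and the place where the heterochronous case departs nontrivially from the isochronous one, is step (b): the matrix inequalities must \emph{force} the node-type labeling read off from the diagonal to be consistent with the out-degrees read off from $E$. I would handle this by proving the identity $D_{j,j}=F_{j,j}-F_{j,j-1}$ and then using Constraint~\ref{item.diagonal} together with Constraint~\ref{item.subdiagonal} to show $D_{j,j}\in\{0,2\}$, with the value $2$ precisely when $F_{j,j}=F_{j-1,j-1}+1$. Everything else (acyclicity of the resulting graph is automatic because edges go strictly from lower to higher rank; connectedness follows from each non-root having a unique parent and the root being unique) is then routine.
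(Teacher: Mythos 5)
Your proposal is correct and follows essentially the same route as the paper's proof: necessity is argued by the same edge-counting/lineage interpretation of each constraint, and sufficiency by passing to $D$ and $E$ via equations~\eqref{eq.matrix_bijections_df_ed} and verifying that $E$ has entries in $\{0,1\}$, row sums equal to $1$, and column sums in $\{0,2\}$ through the same telescoping identities ($F_{i,i}-F_{i+1,i}=1$ and $D_{j,j}=F_{j,j}-F_{j,j-1}$). Your additional explicit checks (leaf count via telescoping the diagonal, acyclicity, connectedness) are left implicit in the paper but are consistent with it.
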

\begin{proof}

  We first verify that the conditions are necessary.
  Suppose $F$ is the \textbf{F}-matrix associated to a fully heterochronous ranked tree shape with $n$ leaves.
  Let $D$ and $E$ be the associated \textbf{D}-matrix and \textbf{E}-matrix.

  Since $F_{i,j} - F_{i,j-1} = D_{i,j}$, Condition \ref{item.row} is equivalent to $D_{i,j}\ge 0$, which is true.
  Condition \ref{item.column} states that the number of edges from nodes $v$ to nodes $w$, where $\text{rank}(v)\le j$ and $\text{rank}(w)=i$, is exactly 1 or 0 (either the parent node of $w$ has rank at most $j$ or not).

  We handle each part of Condition \ref{item.positional} in order of appearance.
  Since the root node has exactly two children, $F_{0,0}=2$.
  The same edges are counted by $F_{i-1,i-1}$ and $F_{i,i}$ except for three: the edge to the node with rank $i$ (counted by $F_{i-1,i-1}$) and the two edges from the node with rank $i$ (counted by $F_{i,i}$, if they exist), so $F_{i,i}-F_{i-1,i-1}=\pm1$.
  There is a single node with rank larger than $2n-3$, so $F_{2n-3,2n-3}=1$.
  The same edges are counted by $F_{i-1,i-1}$ and $F_{i,i-1}$ except the edge to the node of rank $i$ (counted by $F_{i-1,i-1}$), so $F_{i-1,i-1}-F_{i,i-1}=1$.
  Since $F_{i,j-1} - F_{i,j} + F_{i-1,j}-F_{i-1,j-1} = D_{i-1,j}- D_{i,j} = E_{i-1,j}$, Condition \ref{item.positional}\ref{item.remaining} is equivalent to $E_{i-1,j}\in\{0,1\}$, which is true.

  Next we prove that the conditions are sufficient.
  It is easier to work with the $\mathbf{D}$- and $\mathbf{E}$-matrices than directly with the $\mathbf{F}$-matrix.
  Suppose $F$ is a matrix satisfying the conditions in the statement of the theorem.
  Let $D$ and $E$ be the matrices defined by \eqref{eq.matrix_bijections_df_ed}.
  We show that $E$ is the offset adjacency matrix of some totally ranked tree shape, i.e., $E_{i,j} = A_{j,i+1}$ where $A$ is the adjacency matrix. 
  Recall that in the adjacency matrix, entry $A_{j,i+1}=1$ if nodes with ranks $j$ and $i+1$ are connected by an edge. 
  This requires verifying the following conditions for $E$:
  \begin{enumerate}[label=(\roman*)]
    \item\label{item.E.values} Each $E_{i,j}\in\{0,1\}$, as these are the only valid entries of an adjacency matrix.
    \item\label{item.E.rows} Each row sums to $1$, $\sum_{j=0}^i E_{i,j}=1$, as no node has multiple parents and there is exactly one event (coalescent or sampling) at each event time.
    \item\label{item.E.cols} Each column sums to $0$ or $2$, $\sum_{i=j}^{2n-3} E_{i,j}\in\{0,2\}$, as the tree is binary.
  \end{enumerate}
  This will complete the proof, as we can read the ranked tree shape from the matrix $E$.

  By the definitions of the matrices $E$ and $D$, along with Condition \ref{item.positional}\ref{item.remaining}, we have
  \begin{align*}
    E_{i,j}
    &=
    D_{i,j} - D_{i+1,j}
    =
    F_{i,j} - F_{i,j-1} - F_{i+1,j} + F_{i+1,j-1}
    \\
    &=
    -\left( F_{i+1,j} - F_{i+1,j-1} - F_{i,j} + F_{i,j-1} \right)
    =
    0 \text{ or } 1
    ,
  \end{align*}
  which is \ref{item.E.values}.
  With \ref{item.positional}\ref{item.subdiagonal}, or \ref{item.positional}\ref{item.diagonal} when $i=2n-3$, we have
  \begin{align*}
    \sum_{j=0}^i E_{i,j}
    &=
    \sum_{j=0}^i (D_{i,j} - D_{i+1,j})
    =
    \sum_{j=0}^i (F_{i,j} - F_{i,j-1} - F_{i+1,j} + F_{i+1,j-1})
    \\
    &=
    F_{i,i} - F_{i+1, i}
    = 1
    ,
  \end{align*}
  which is \ref{item.E.rows}.
  By \ref{item.positional}\ref{item.subdiagonal} and \ref{item.positional}\ref{item.diagonal}, we have
  \begin{align*}
    \sum_{i=j}^{2n-3} E_{i,j}
    &=
    \sum_{i=j}^{2n-3} (D_{i,j} - D_{i+1,j})
    =
    D_{j,j}
    =
    F_{j,j} - F_{j,j-1}
    \\
    &=
    \begin{cases}
      F_{0,0} & \text{if } j=0
      ,\\
      F_{j,j} - F_{j-1,j-1} + 1 & \text{otherwise},
    \end{cases}
    \\
    &=
    \begin{cases}
      2 & \text{if } j=0
      ,\\
      0 \text{ or } 2 & \text{otherwise},
    \end{cases}
  \end{align*}
  which is \ref{item.E.cols}.

\end{proof}

With Theorems \ref{theorem.isochronous_f_matrix_tree} and \ref{theorem.heterochronous_f_matrix_tree}, we can tell if a given matrix represents a ranked tree shape or not.
While the difference between the two cases is the diagonal, this is more important than it appears.

We next emphasize the difference between the two cases by showing how a matrix-filling strategy that works for the isochronous case will produce invalid \textbf{F}-matrices in the heterochronous case.
We will then develop a strategy (Proposition~\ref{prop.extend_sequences}) that can fill the matrix in a single pass.

\begin{example}\label{ex.isochronous_fill}
  The \textbf{F}-matrices for isochronous ranked tree shapes with five leaves must fit the pattern:
  \begin{gather*}
    F =
    \begin{pmatrix}
      2 & 0 & 0 & 0\\
      1 & 3 & 0 & 0\\
      * & 2 & 4 & 0\\
      * & * & 3 & 5
    \end{pmatrix}.
  \end{gather*}
  We can determine all \textbf{F}-matrices by filling the remaining entries in order of $F_{2,0}$, $F_{3,0}$, and $F_{3,1}$.
  For $F_{2,0}$ we have two options: 0 or 1.
  Suppose we select $F_{2,0}=0$.
  Moving to $F_{3,0}$, we are forced to select $F_{3,0}=0$ by Constraint 2.
  Lastly, for $F_{3,1}$ our options are 1 or 2, both of which yield valid \textbf{F}-matrices.
  One can verify that if we instead begin with $F_{2,0}=1$, the remaining entries work out in a similar fashion.

  To see what can go wrong in the fully heterochronous case, consider the partially filled \textbf{F}-matrix,
  \begin{gather*}
    F =
    \begin{pmatrix}
      2 & 0 & 0 & 0\\
      1 & 3 & 0 & 0 \\
      * & * & * & 0\\
      * & * & * & *
    \end{pmatrix}.
  \end{gather*}
  For $F_{2,0}$ we have two options, 0 or 1; suppose we take $F_{2,0}=0$.
  We are forced to have $F_{2,1}=2$ by Condition \ref{item.positional}\ref{item.subdiagonal}.
  Next we must take $F_{2,2}=2$, as $F_{2,2}=4$ yields the contradiction $3=F_{3,2}\leq F_{3,3}=1$ by Conditions \ref{item.positional}(a,b).
  Additionally, we are forced to take $F_{3,0}=0$ by Constraint 2.
  In
  \begin{gather*}
    F =
    \begin{pmatrix}
      2 & 0 & 0 & 0\\
      1 & 3 & 0 & 0 \\
      0 & 2 & 2 & 0\\
      0 & * & * & *
    \end{pmatrix},
  \end{gather*}
  we have the two options of 1 or 2 for $F_{3,1}$, but are forced to have $F_{3,2}=F_{3,3}=1$ by \ref{item.positional}(a,b).
  While
  \begin{gather*}
    F =
    \begin{pmatrix}
      2 & 0 & 0 & 0\\
      1 & 3 & 0 & 0 \\
      0 & 2 & 2 & 0\\
      0 & 1 & 1 & 1
    \end{pmatrix}
  \end{gather*}
  is a valid \textbf{F}-matrix,
  \begin{gather*}
    F =
    \begin{pmatrix}
      2 & 0 & 0 & 0\\
      1 & 3 & 0 & 0 \\
      0 & 2 & 2 & 0\\
      0 & 2 & 1 & 1
    \end{pmatrix}
  \end{gather*}
  is not as the last row violates the monotone increasing property.
\end{example}

This example shows how the strategy of filling rows one by one from top to bottom (i.e., traversing entries $F_{0,0}, F_{1,0}, F_{1,1}, F_{2,0}, \cdots, F_{2n-3,2n-4}, F_{2n-3,2n-3}$)  produces all \textbf{F}-matrices, in both the isochronous and heterochronous case, but additional constraints are necessary to prevent invalid \textbf{F}-matrices in the heterochronous case.
Specifically, some combinations of values for $F_{i,j}$ and $F_{i,i-1}$ from Conditions \ref{item.positional}\ref{item.remaining} and \ref{item.positional}\ref{item.subdiagonal} in Theorem~\ref{theorem.heterochronous_f_matrix_tree} may conflict with item \ref{item.row}.
In the example, the invalid combination is $F_{i,j}=F_{3,1}=2$ and $F_{i,i-1}=F_{3,2}=1$.

For the remainder of this section, we describe a matrix-filling strategy that does not lead to contradictions in the heterochronous case.
As the subdiagonal entries are determined by the diagonal entries, we first verify that any choice of diagonal entries by Condition \ref{item.positional}\ref{item.diagonal} and an additional constraint yields at least one valid $\mathbf{F}$-matrix.
That is to say, when solving the system of inequalities in Theorem \ref{theorem.heterochronous_f_matrix_tree}, we may select values for the diagonal without backtracking.

\begin{cor}\label{cor.totally_ranked_diagonals}
  Let $n$ and $N$ be non-negative integers with $N\le 2n-3$.
  Suppose $f_{i}$, for $0\le i\le N$, is a sequence of positive integers where,
  \begin{enumerate}
    \item $f_{0}=2$,
    \item\label{item.cor.totally_ranked_diagonals.2} $f_{i} = f_{i-1}\pm 1$ for $1\le i\le N$, and
    \item\label{item.cor.totally_ranked_diagonals.3} $f_{i}\leq 2n-i-2$ for $0\le i\le N$.
  \end{enumerate}
  Then there exists $F$, an \textbf{F}-matrix for a fully heterochronous ranked tree shape with $n$ leaves, with $F_{i,i}=f_{i}$ for $0\le i\le N$.
\end{cor}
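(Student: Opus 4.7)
The plan is to reduce to the case $N = 2n-3$ in two steps: first extend the given partial diagonal to a full admissible diagonal of length $2n-2$, and then exhibit an explicit fully heterochronous ranked tree shape realizing it.

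For the extension, I would complete $f_0,\ldots,f_N$ to a sequence $f_0, f_1, \ldots, f_{2n-3}$ of positive integers that continues to satisfy $f_i = f_{i-1}\pm 1$ and $f_i \le 2n-i-2$, and that ends at $f_{2n-3} = 1$. A concrete prescription is to descend $f_{N+k} = f_N - k$ for $k=1,\ldots,f_N-1$ (permissible because condition~(3) at $i=N$ provides enough remaining slots), and then oscillate $f$ between $1$ and $2$ up to index $2n-3$. Parity of $N+f_N$, forced by $f_0 = 2$ together with the $\pm 1$ rule, guarantees that the oscillation lands at $f_{2n-3} = 1$, and the intermediate values plainly stay in the allowed strip.

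For the tree, given any full diagonal $f_0,\ldots,f_{2n-3}$ of this form, I would build the ranked tree shape top-down. Start with a root at rank $0$ producing two extant edges. For each $i = 1, 2, \ldots, 2n-3$, pick any extant edge and either bifurcate it (attaching a new internal node of rank $i$ with two new outgoing extant edges) when $f_i = f_{i-1}+1$, or cap it off with a leaf of rank $i$ when $f_i = f_{i-1}-1$. Finally, attach a leaf of rank $2n-2$ to the unique remaining extant edge. Because $f_{i-1}\ge 1$ throughout, a valid choice is always available; counting sampling events gives exactly $n$ leaves; and by the definition of the $\mathbf{F}$-matrix entries the number of edges alive in the interval following the rank-$i$ event is $f_i$, so $F_{i,i} = f_i$ for every $i$, and in particular for $0\le i\le N$.

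The main subtlety is Step~1: the bound $f_i \le 2n-i-2$ is tight at $i=2n-3$ (forcing $f_{2n-3}=1$), so the extension must stay inside a corridor that narrows as $i$ grows. The descend-then-oscillate prescription is the cleanest way to thread this corridor, after which Step~2 is a routine top-down simulation whose output is certified to be an $\mathbf{F}$-matrix by Theorem~\ref{theorem.heterochronous_f_matrix_tree}.
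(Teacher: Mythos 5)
Your proposal is correct and follows essentially the same route as the paper's proof: extend the partial diagonal to a full admissible diagonal ending at $f_{2n-3}=1$ (the paper simply asserts this is possible from condition~(3), whereas you give the explicit descend-then-oscillate prescription and the parity check), and then realize the full diagonal as the lineage-count sequence of a tree built top-down by bifurcating when $f_i=f_{i-1}+1$ and sampling a leaf when $f_i=f_{i-1}-1$. No gaps; your version just spells out details the paper leaves implicit.
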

\begin{proof}
  The inequality in Item \ref{item.cor.totally_ranked_diagonals.3} of the corollary guarantees that it is possible to extend the sequence to length $2n-3$ while satisfying item \ref{item.cor.totally_ranked_diagonals.2}, Item \ref{item.cor.totally_ranked_diagonals.3}, and $f_{2n-3}=1$.
  The $f_{i}$ are the first $N+1$ diagonal entries of any $(2n-2)\times(2n-2)$ \textbf{F}-matrix associated to a fully heterochronous ranked tree shape whose first $N+1$ nodes (ordered by rank) bifurcate when $f_{i} = f_{i-1}+1$ and are leaves when $f_{i} = f_{i-1}-1$.
\end{proof}

We introduce notation for bounds that often appear with \textbf{F}-matrices.
For a matrix or doubly indexed sequence, $F$, we set
\begin{align*}
  L_F(i,j)  &:= \max\left(F_{i,j-1}, F_{i-1,j}-1, F_{i,j-1} + F_{i-1,j} - F_{i-1,j-1} - 1\right)
  ,\\
  U_F(i,j) &:= \min(F_{i-1,j}, F_{i,j-1} + F_{i-1,j} - F_{i-1,j-1}),
\end{align*}
with the convention that $F_{k,\ell}=0$ when $k$ or $\ell$ is negative.
The key feature of these bounds is that the entries of an \textbf{F}-matrix, off the diagonal and subdiagonal, are classified by $L_F(i,j) \le F_{i,j}\le U_F(i,j)$.
When filling the entries of an \textbf{F}-matrix, by row then column, we are free to choose $L_F(i,j)$ or $U_F(i,j)$ for $F_{i,j}$ in the isochronous case, but this is not always true in the heterochronous case.

We require notation for the concept of a partially filled \textbf{F}-matrix of a fully heterochronous ranked tree shape.
This will correspond to filling the first $N$ rows and the first $M+1$ columns of the $N+1$st row of an \textbf{F}-matrix.
We must do so in a way that guarantees the values chosen so far will not conflict with values chosen later on.
\begin{defn}\label{def.f_sequence}
  Let $n$, $N$, and $M$ be non-negative integers with $M\le N\le 2n-3$ and set $B=\max(N-1, M)$.
  An \emph{$(n,N,M)$ \textbf{F}-sequence} is a doubly indexed sequence $f_{i,j}$, defined for $(i,j)$ in $\left\{(i,j)\mid 0\le j\le i\le N-1\right\} \cup \left\{(N,j)\mid 0\le j\le M\right\}$, of non-negative integers where,
  \begin{enumerate}
    \item\label{def.partial_seq.diagonal} the sequence $f_{i,i}$, for $0\le i \le B$, satisfies the conditions of Corollary \ref{cor.totally_ranked_diagonals} with $(n,N)\mapsto(n,B)$,
    \item\label{def.partial_seq.general_bounds} $L_f(i,j) \le f_{i,j} \le U_f(i,j)$ for $0\le j\le i-2$, where $(i,j)$ are valid indices, and
    \item\label{def.partial_seq.special} if $f_{i-1,j}=f_{i-1,i-1}$, then $f_{i,j} = f_{i-1,i-1} - 1$, for valid indices with $0\le j\le i-1$.
  \end{enumerate}
\end{defn}
We will show below that an $(n,N,M)$ \textbf{F}-sequence fills the first $N$ rows and the first $M+1$ columns of the $N+1$st row of a $(2n-2)\times(2n-2)$ \textbf{F}-matrix.
Under the lexicographical order, the \textbf{F}-matrix is filled up to and including entry $(N,M)$.

\begin{example}\label{ex.f_sequence}
  As seen in Example \ref{ex.isochronous_fill}, there is one $(5,1,1)$ \textbf{F}-sequence: $
  \begin{psmallmatrix}2&\\1&3
  \end{psmallmatrix}$.
  The two possible $(5,2,0)$ \textbf{F}-sequences extend the $(5,1,1)$ \textbf{F}-sequence by filling the first entry of the next row, and
  are given by $
  \begin{psmallmatrix}  2&\\1&3\\0&
  \end{psmallmatrix}$   and $
  \begin{psmallmatrix}2&\\1&3\\  1&
  \end{psmallmatrix}$.
\end{example}

The following lemma is necessary for extending an \textbf{F}-sequence by one entry.
\begin{lemma}
\label{lemma.prop_extend.inequality}
If $f_{i,j}$ is an $(n,N,M)$ \textbf{F}-sequence with $M<N-2$, then 			
\begin{gather*}
	f_{N,M} \le f_{N-1,M} \le f_{N-1,M+1}.
\end{gather*}
\end{lemma}
\begin{proof}
	By Definition \ref{def.f_sequence}.\ref{def.partial_seq.general_bounds}, $f_{N,M} \le U_f(N,M) \le f_{N-1,M}$.
	We handle the remaining bound in three cases.
	First note that by Definitions \ref{def.f_sequence}.\ref{def.partial_seq.special} and \ref{def.f_sequence}.\ref{def.partial_seq.diagonal},
	\begin{gather*}
		f_{N-2,N-3} = f_{N-3,N-3} - 1 = (f_{N-2,N-2} \pm 1) -1.
	\end{gather*}
	When $M=N-3$ and $f_{N-2,N-3} = f_{N-2,N-2}$, by Definition \ref{def.f_sequence}.\ref{def.partial_seq.special},
	\begin{gather*}
		f_{N-1,M} = f_{N-1,N-3} = f_{N-2,N-2}-1 = f_{N-1,N-2} = f_{N-1,M+1}.
	\end{gather*}
	When $M=N-3$ and $f_{N-2,N-3} = f_{N-2,N-2}-2$, Definitions \ref{def.f_sequence}.\ref{def.partial_seq.general_bounds} and \ref{def.f_sequence}.\ref{def.partial_seq.special} give
	\begin{align*}
		f_{N-1,M} \le U_f(N-1,N-3) \le f_{N-2,N-3} = f_{N-2,N-2} - 2
		&= f_{N-1,N-2}-1
		\\
		&< f_{N-1,M+1}.
	\end{align*}
	Lastly, when $M<N-3$, by Definition \ref{def.f_sequence}.\ref{def.partial_seq.general_bounds},
	\begin{gather*}
		f_{N-1,M+1} \ge L_f(N-1,M+1) \ge f_{N-1,M}.
	\end{gather*}
\end{proof}

We show how to extend an $(n,N,M)$ \textbf{F}-sequence to a $(2n-2) \times (2n-2)$ \textbf{F}-matrix and that every \textbf{F}-matrix appears in such a way.

\begin{prop}\label{prop.extend_sequences}
  If $f_{i,j}$ is an $(n,N,M)$ \textbf{F}-sequence with $M<N$ or $N<2n-3$, then the following methods extend $f$ to a longer \textbf{F}-sequence.
  \begin{enumerate}
    \item\label{item.extend_sequences.row} If $M<N-2$, setting $f_{N,M+1}$ to
      \begin{enumerate}
        \item $f_{N-1,N-1}-1$ if $f_{N-1,M+1}=f_{N-1,N-1}$, and otherwise
        \item either of $L_f(N,M+1)$ or $U_f(N,M+1)$,
      \end{enumerate}
      yield $(n,N,M+1)$ \textbf{F}-sequences.
    \item\label{item.extend_sequences.subdiagonal} If $M=N-2$, setting $f_{N,N-1}=f_{N-1,N-1}-1$ yields an $(n,N,N-1)$ \textbf{F}-sequence.
    \item\label{item.extend_sequences.diagonal} If $M=N-1$, setting $f_{N,N}$ to either
      \begin{itemize}
        \item $f_{N-1,N-1}-1$ if $f_{N-1,N-1}>1$, or
        \item $f_{N-1,N-1}+1$ if $f_{N-1,N-1}<2n-N-1$,
      \end{itemize}
      both yield $(n,N,N)$ \textbf{F}-sequences.
    \item\label{item.extend_sequences.new_row}
      If $M=N$, setting $F_{N+1,0}$ to
      \begin{enumerate}
        \item $f_{N,N}-1$ if $f_{N,0}=f_{N,N}$, and otherwise
        \item either of $\max\left(0, f_{N,0}-1\right)$ or $f_{N,0}$,
      \end{enumerate}
      yield $(n,N+1,0)$ \textbf{F}-sequences.
  \end{enumerate}
\end{prop}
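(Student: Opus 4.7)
The plan is to verify, for each of the four cases, that the extended sequence still satisfies conditions~\ref{def.partial_seq.diagonal}--\ref{def.partial_seq.special} of Definition~\ref{def.f_sequence}. Since pre-existing entries are unchanged, their constraints are preserved, so the work lies in checking the one new entry. Depending on its position, the check involves: condition~\ref{def.partial_seq.diagonal} (only case~\ref{item.extend_sequences.diagonal}, via Corollary~\ref{cor.totally_ranked_diagonals}); the sandwich $L_f(i,j)\le f_{i,j}\le U_f(i,j)$ from condition~\ref{def.partial_seq.general_bounds} (cases~\ref{item.extend_sequences.row} and~\ref{item.extend_sequences.new_row}, whose new entries are off-(sub)diagonal); and the ``inherited maximum'' rule of condition~\ref{def.partial_seq.special} (cases~\ref{item.extend_sequences.row}, \ref{item.extend_sequences.subdiagonal}, \ref{item.extend_sequences.new_row}).

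For case~\ref{item.extend_sequences.row}, the dichotomy on whether $f_{N-1,M+1}$ equals $f_{N-1,N-1}$ directly mirrors the hypothesis of condition~\ref{def.partial_seq.special} at $(i,j)=(N,M+1)$. In sub-case~(a), condition~\ref{def.partial_seq.special} forces $f_{N,M+1}=f_{N-1,N-1}-1$, and one must verify that this lies in $[L_f,U_f]$. The bound $L_f\ge f_{N-1,M+1}-1=f_{N-1,N-1}-1$ is immediate from one term inside the $\max$; the remaining terms reduce to the inequality $f_{N,M}\le f_{N-1,N-1}-1$. I would establish this by splitting on whether $f_{N-1,M}=f_{N-1,N-1}$ (in which case condition~\ref{def.partial_seq.special} applied to the existing sequence already pins $f_{N,M}=f_{N-1,N-1}-1$) or $f_{N-1,M}<f_{N-1,N-1}$ (then $f_{N,M}\le f_{N-1,M}<f_{N-1,N-1}$, using the column constraint that is built into $U_f$). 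The inequality $f_{N-1,N-1}-1\le U_f$ is immediate from $U_f\le f_{N-1,M+1}=f_{N-1,N-1}$ together with the same bound $f_{N,M}\le f_{N-1,M}$. In sub-case~(b), row-monotonicity of $f_{N-1,\cdot}$ combined with $f_{N-1,M+1}\neq f_{N-1,N-1}$ gives $f_{N-1,M+1}<f_{N-1,N-1}$, so condition~\ref{def.partial_seq.special} imposes nothing, and both $L_f(N,M+1)$ and $U_f(N,M+1)$ are automatically valid. Case~\ref{item.extend_sequences.new_row} is completely analogous: a direct evaluation using the convention $f_{k,-1}=0$ gives $L_f(N+1,0)=\max(0,f_{N,0}-1)$ and $U_f(N+1,0)=f_{N,0}$, matching the two options in~(b), while~(a) reduces to condition~\ref{def.partial_seq.special} at $(i,j)=(N+1,0)$ with the same forced value $f_{N,N}-1=f_{N,0}-1$.

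Cases~\ref{item.extend_sequences.subdiagonal} and~\ref{item.extend_sequences.diagonal} are briefer. For the subdiagonal extension, condition~\ref{def.partial_seq.general_bounds} is vacuous (it requires $j\le i-2$); condition~\ref{def.partial_seq.special} with the tautology $f_{N-1,N-1}=f_{N-1,N-1}$ forces $f_{N,N-1}=f_{N-1,N-1}-1$, which is non-negative because $f_{N-1,N-1}\ge 1$ by condition~\ref{def.partial_seq.diagonal}. For the diagonal extension, condition~\ref{def.partial_seq.special} lies outside its index range ($j\le i-1$), so only condition~\ref{def.partial_seq.diagonal} must be re-checked; the two stated side-conditions on $f_{N-1,N-1}$ are exactly what is needed to preserve the positivity and upper-bound hypotheses of Corollary~\ref{cor.totally_ranked_diagonals} at the new diagonal index $N$.

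The main obstacle will be case~\ref{item.extend_sequences.row}(a): it is the unique place where a value \emph{forced} by condition~\ref{def.partial_seq.special} might, a priori, collide with the interval $[L_f,U_f]$ from condition~\ref{def.partial_seq.general_bounds}. Establishing the bound $f_{N,M}\le f_{N-1,N-1}-1$ is the crux; its two-case verification is exactly the phenomenon that differentiates the heterochronous construction from the isochronous one, as illustrated by Example~\ref{ex.isochronous_fill}.
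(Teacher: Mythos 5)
Your overall structure is right, and your reduction of case~\ref{item.extend_sequences.row}(a) to the bound $f_{N,M}\le f_{N-1,N-1}-1$ is on target, but there is a genuine gap: you repeatedly invoke ``row-monotonicity of $f_{N-1,\cdot}$'' (to make your dichotomy $f_{N-1,M}=f_{N-1,N-1}$ versus $f_{N-1,M}<f_{N-1,N-1}$ exhaustive in sub-case~(a), and again in sub-case~(b)) as though it were part of Definition~\ref{def.f_sequence}. It is not. Condition~\ref{def.partial_seq.general_bounds} only forces $f_{i,j-1}\le f_{i,j}$ for $j\le i-2$; the comparison of $f_{N-1,N-3}$ with the subdiagonal entry $f_{N-1,N-2}$ is exactly what can fail in the heterochronous setting, and establishing it is the technical heart of the paper's proof. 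The paper isolates this as the chain $f_{N,M}\le f_{N-1,M}\le f_{N-1,M+1}$ (its inequality~\eqref{eq.prop_extend.inequality}) and proves the second inequality by splitting on $M<N-3$ versus $M=N-3$, the latter further split according to whether $f_{N-2,N-3}$ equals $f_{N-2,N-2}$ or $f_{N-2,N-2}-2$. Note that full row monotonicity of a completed \textbf{F}-matrix is only derived afterwards, in Corollary~\ref{corollary.valid_values}, using this very proposition, so you cannot assume it here without circularity.

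The same omission sinks your sub-case~(b): you assert that both $L_f(N,M+1)$ and $U_f(N,M+1)$ are ``automatically valid,'' but a choice of $f_{N,M+1}\in\{L_f,U_f\}$ satisfies condition~\ref{def.partial_seq.general_bounds} only if $L_f(N,M+1)\le U_f(N,M+1)$, which is precisely what the paper flags as ``the major claim in justifying that such sequences extend.'' Unwinding the max and min, the nontrivial comparisons are $f_{N,M}\le f_{N-1,M+1}$ and $f_{N,M}\le f_{N,M}+f_{N-1,M+1}-f_{N-1,M}$, both of which again reduce to $f_{N-1,M}\le f_{N-1,M+1}$. (A smaller slip: your justification of $f_{N-1,N-1}-1\le U_f$ in sub-case~(a) cites upper bounds on $U_f$; what is actually needed is the lower bound $f_{N,M}\ge f_{N-1,M}-1$, which does follow from condition~\ref{def.partial_seq.general_bounds} at $(N,M)$.) Your treatment of cases~\ref{item.extend_sequences.subdiagonal}, \ref{item.extend_sequences.diagonal}, and~\ref{item.extend_sequences.new_row} matches the paper and is fine.
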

\begin{proof}
	The proof is to verify that the extended sequences satisfy Definition \ref{def.f_sequence}.
	We do so in the order of the four cases of the size of $M$ relative to $N$.
	However, we do not split each case into the subcases of the statement of the proposition.
    
	We begin with Item \ref{item.extend_sequences.row}, where we assume $M<N-2$.
	This is the lengthiest case to prove.	
	Since the sequence is not extended with a diagonal entry, the extended sequence immediately satisfies Definition \ref{def.f_sequence}.\ref{def.partial_seq.diagonal}.
	To verify the extended sequence satisfies Definitions \ref{def.f_sequence}.\ref{def.partial_seq.general_bounds} and \ref{def.f_sequence}.\ref{def.partial_seq.special}, we show that $L_f(N,M+1) \leq U_f(N,M+1)$ and that if $f_{N-1,M+1}=f_{N-1,N-1}$ then $L_f(N,M+1)=f_{N-1,N-1}-1$.
		
	At the outset, we do not know that $L_f(N,M+1) \leq U_f(N,M+1)$; we only have that $L_f(i,j)\leq f_{i,j} \leq U_f(i,j)$ for $0\le j\le i-2$ with $(i,j)$ before ${(N,M+1)}$ in lexicographical order.
	Recall that
	\begin{gather*}
	U_f(N,M+1) = \min(f_{N-1,M+1}, f_{N,M} + f_{N-1,M+1} - f_{N-1,M}).
	\end{gather*}
	We consider the two possible values for $U_f(N,M+1)$.
  	
	Suppose $U_{f}(N,M+1)=f_{N-1,M+1}$, which means $f_{N,M}\ge f_{N-1,M}$.
	However, $f_{N,M}\le f_{N-1,M} \le f_{N-1,M+1}$ by Lemma \ref{lemma.prop_extend.inequality}, so in fact $f_{N,M} = f_{N-1,M}$,
	and therefore,
  	\begin{gather*}
    	L_f(N,M+1)
    	=
    	\max\left(f_{N,M}, f_{N-1,M+1} - 1\right)
    	\le
    	f_{N-1,M+1}
   	 =
    	U_f(N,M+1).
	\end{gather*}
	What remains is to show that if $f_{N-1,M+1}=f_{N-1,N-1}$ then $L_f(N,M+1)=f_{N-1,N-1}-1$.
	Suppose $f_{N-1,M+1}=f_{N-1,N-1}$.
	We claim that $f_{N,M}<f_{N-1,M+1}$.
	For if not, meaning $f_{N,M}=f_{N-1,M+1}=f_{N-1,N-1}$, then $f_{N-1,M}=f_{N-1,M+1}=f_{N-1,N-1}$ by Lemma \ref{lemma.prop_extend.inequality} and so $f_{N,M}=f_{N-1,N-1}-1$ by Definition \ref{def.f_sequence}.\ref{def.partial_seq.special}, which is a contradiction.
	So if $f_{N-1,M+1}=f_{N-1,N-1}$ then $L_f(N,M+1)=f_{N-1,N-1}-1$.
	In summary, under the assumption that $U_{f}(N,M+1)=f_{N-1,M+1}$, the extended sequence satisfies Definition \ref{def.f_sequence}.

	Suppose instead $U_{f}(N,M+1)=f_{N,M}+f_{N-1,M+1}-f_{N-1,M}<f_{N-1,M+1}$, which means $f_{N,M}<f_{N-1,M}$.
	However, $f_{N-1,M}-1\leq f_{N,M}$ by Definition \ref{def.f_sequence}.\ref{def.partial_seq.general_bounds}, so in fact $f_{N,M}=f_{N-1,M}-1$, whereas $f_{N-1,M}\le f_{N-1,M+1}$ by Lemma \ref{lemma.prop_extend.inequality}.
	Therefore
  \begin{align*}
    L_f(N,M+1)
    &=
    \max\left(f_{N,M}, f_{N-1,M+1} - 1\right)
    =
    f_{N-1,M+1} - 1
    \\
    &=
    f_{N-1,M+1} + f_{N,M} - f_{N-1,M}
    =U_f(N,M+1).
  \end{align*}
	If $f_{N-1,M+1}=f_{N-1,N-1}$, then trivially $L_f(N,M+1)=f_{N-1,N-1}-1$.
	In summary, under the assumption that $U_{f}(N,M+1)\not=f_{N-1,M+1}$, the extended sequence satisfies Definition \ref{def.f_sequence}.
	This completes the proof of Item \ref{item.extend_sequences.row}.

	For Item \ref{item.extend_sequences.subdiagonal}, we note that $f_{N-1,N-1}-1$ is exactly the value of $f_{N,N-1}$ specified by Definition \ref{def.f_sequence}.\ref{def.partial_seq.special}	.
	Since the sequence is extended by a subdiagonal entry, the extended sequence immediately satisfies Definitions \ref{def.f_sequence}.\ref{def.partial_seq.diagonal} and \ref{def.f_sequence}.\ref{def.partial_seq.general_bounds}. 
	
	For Item \ref{item.extend_sequences.diagonal}, we note that these values of $f_{N,N}$ are exactly those specified by Definition \ref{def.f_sequence}.\ref{def.partial_seq.diagonal}.
	Since the sequence is extended by a diagonal entry, the extended sequence immediately satisfies Definitions \ref{def.f_sequence}.\ref{def.partial_seq.general_bounds} and \ref{def.f_sequence}.\ref{def.partial_seq.special}. 

	Last, we verify Item \ref{item.extend_sequences.new_row}.
	Since the sequence is not extended with a diagonal entry, the extended sequence immediately satisfies Definition \ref{def.f_sequence}.\ref{def.partial_seq.diagonal}.
	As with Item \ref{item.extend_sequences.row}, we show that the extended sequence satisfies Definitions \ref{def.f_sequence}.\ref{def.partial_seq.general_bounds} and \ref{def.f_sequence}.\ref{def.partial_seq.special} by showing that $L_f(N+1,0) \leq U_f(N+1,0)$ and that if $f_{N,0}=f_{N,N}$ then $L_f(N+1,0)=f_{N,N}-1$.
	We have
  	\begin{gather*}
   	 L_f(N+1,0) = \max(0,f_{N,0}-1) \le f_{N,0} = U_f(N+1,0).
  \end{gather*}
  If $f_{N,0}=f_{N,N}$, then $f_{N,0}\ge 1$ and so $L_f(N+1,0)=f_{N,0}-1=f_{N-1,N-1}-1$.

\end{proof}

\begin{example}
	Using Proposition \ref{prop.extend_sequences} above, we can avoid the conflict in Example \ref{ex.isochronous_fill}. 
	To extend the  $(3,3,0)$ \textbf{F}-sequence 
	\[
	\begin{pmatrix}
		2 & 0 & 0 & 0\\
		1 & 3 & 0 & 0 \\
		0 & 2 & 2 & 0\\
		0 & * & * & *
	\end{pmatrix},
	\]
	Item \ref{item.extend_sequences.row}(a) tells us $f_{3,1}$ must be $1$, and the resulting $(3,3,1)$ \textbf{F}-sequence is 
	\[
	\begin{pmatrix}
		2 & 0 & 0 & 0\\
		1 & 3 & 0 & 0 \\
		0 & 2 & 2 & 0\\
		0 &  1 & * & *
	\end{pmatrix}. 
	\]
	Notably, the illegal value $f_{3,2}=2$ is no longer allowed. 
\end{example}

\begin{cor}\label{corollary.valid_values}
  A $(2n-2)\times(2n-2)$ lower triangular matrix $F$ is an \textbf{F}-matrix for a fully heterochronous ranked tree shape if and only if the entries $F_{i,j}$ are an $(n,2n-3,2n-3)$ \textbf{F}-sequence.
\end{cor}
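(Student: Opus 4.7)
The plan is to prove the corollary as an equivalence by unfolding both sides against Theorem \ref{theorem.heterochronous_f_matrix_tree}. The two characterizations encode the same information but package it differently: the theorem states row monotonicity and the subdiagonal relation as separate clauses, while Definition \ref{def.f_sequence} absorbs both into its ``special'' Condition \ref{def.partial_seq.special} together with the $L_f, U_f$ bounds. So the work is essentially translating between these two packagings.

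For the forward direction, assume $F$ is an \textbf{F}-matrix for a fully heterochronous ranked tree shape. The diagonal hypothesis of Corollary \ref{cor.totally_ranked_diagonals} (Condition \ref{def.partial_seq.diagonal} of the \textbf{F}-sequence) follows from item \ref{item.positional}\ref{item.diagonal} of the theorem, with the cap $F_{i,i}\le 2n-i-2$ obtained by telescoping $F_{k,k}=F_{k-1,k-1}\pm 1$ backward from $F_{2n-3,2n-3}=1$. Condition \ref{def.partial_seq.general_bounds} is item \ref{item.positional}\ref{item.remaining} of the theorem verbatim. The essential step is Condition \ref{def.partial_seq.special}: whenever $F_{i-1,j}=F_{i-1,i-1}$, row monotonicity combined with the subdiagonal relation gives $F_{i,j}\le F_{i,i-1}=F_{i-1,i-1}-1$, while column monotonicity gives $F_{i,j}\ge F_{i-1,j}-1=F_{i-1,i-1}-1$, so equality is forced.

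For the backward direction, assume the entries of $F$ form an $(n,2n-3,2n-3)$ \textbf{F}-sequence. Condition \ref{def.partial_seq.diagonal} yields item \ref{item.positional}\ref{item.diagonal} of the theorem; Condition \ref{def.partial_seq.special} applied at $j=i-1$ yields item \ref{item.positional}\ref{item.subdiagonal}; Condition \ref{def.partial_seq.general_bounds} yields item \ref{item.positional}\ref{item.remaining}. Column monotonicity (item \ref{item.column}) follows from the $L_f, U_f$ bounds for entries with $j\le i-2$ and from the subdiagonal relation for $j=i-1$. For row monotonicity (item \ref{item.row}), the inequalities $f_{i,j-1}\le L_f(i,j)\le f_{i,j}$ handle $j\le i-2$. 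The single delicate step is the transition $f_{i,i-2}\le f_{i,i-1}$, which I would handle by splitting on the sign in $f_{i-1,i-1}=f_{i-2,i-2}\pm 1$: if $f_{i-1,i-1}=f_{i-2,i-2}+1$, then $f_{i-1,i-2}=f_{i-1,i-1}-2$, so $f_{i,i-2}\le f_{i-1,i-2}<f_{i,i-1}$; if $f_{i-1,i-1}=f_{i-2,i-2}-1$, then $f_{i-1,i-2}=f_{i-1,i-1}$, and Condition \ref{def.partial_seq.special} pins $f_{i,i-2}=f_{i-1,i-1}-1=f_{i,i-1}$.

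The main obstacle is precisely this last case split: it is the point at which the ``hidden'' row monotonicity across the subdiagonal, which is exactly what goes wrong in the naive heterochronous enumeration of Example \ref{ex.isochronous_fill}, must be recovered from Condition \ref{def.partial_seq.special}. Once this transition is handled, every remaining verification is a one-line consequence of the matching conditions in the two definitions.
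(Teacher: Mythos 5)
Your proposal is correct and follows essentially the same route as the paper: the forward direction forces Condition \ref{def.partial_seq.special} by sandwiching $F_{i,j}$ between the column bound $F_{i-1,j}-1$ and the row/subdiagonal bound $F_{i,i-1}=F_{i-1,i-1}-1$, and the backward direction reduces to the single inequality $F_{i,i-2}\le F_{i,i-1}$, handled by the same two-case analysis (your split on the sign of $f_{i-1,i-1}-f_{i-2,i-2}$ is just the paper's split on whether $F_{i-1,i-2}$ equals $F_{i-1,i-1}$ or $F_{i-1,i-1}-2$, stated one step earlier). The only cosmetic difference is that you spell out the telescoping bound $F_{i,i}\le 2n-i-2$, which the paper treats as immediate.
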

\begin{proof}
  Suppose $F$ is an \textbf{F}-matrix.
  The sequence $F_{i,j}$ immediately satisfies all conditions of an $(n,2n-3,2n-3)$ \textbf{F}-sequence, except possibly the condition when $F_{i-1,j}=F_{i-1,i-1}$ in Definition \ref{def.f_sequence}.\ref{def.partial_seq.special}.
  If $F_{i-1,j}=F_{i-1,i-1}$ and $0\le j\le i-1$, then by Conditions \ref{item.column}, \ref{item.row}, and \ref{item.positional}\ref{item.subdiagonal} of Theorem \ref{theorem.heterochronous_f_matrix_tree},
  \begin{gather*}
    F_{i-1,j} -1 \le F_{i,j} \le F_{i,i-1} = F_{i-1,i-1} -1.
  \end{gather*}
  So
  \begin{gather*}
    F_{i-1,i-1} -1 \le F_{i,j} \le F_{i-1,i-1} -1,
  \end{gather*}
  meaning $F_{i,j} = F_{i-1,i-1} -1$.

  For the converse, suppose $F_{i,j}$ is an $(n,2n-3,2n-3)$ \textbf{F}-sequence.
  Given the definition of such a sequence and Theorem \ref{theorem.heterochronous_f_matrix_tree}, we need only show that
  $F_{i,i-2} \le F_{i,i-1}$ for $i\ge2$.
  If $F_{i-1,i-2}=F_{i-1,i-1}$, then
  \begin{gather*}
    F_{i,i-2} = F_{i-1,i-1} -1 = F_{i,i-1}.
  \end{gather*}
  Otherwise, $F_{i-1,i-2}=F_{i-1,i-1}-2$ and so
  \begin{gather*}
    F_{i,i-2} \le F_{i-1,i-2} < F_{i-1,i-1} -1 = F_{i,i-1}.
  \end{gather*}
\end{proof}

Let us emphasize that Proposition \ref{prop.extend_sequences} provides the rules used to construct the entries, one at a time, of an \textbf{F}-matrix.
When following these rules, there is no chance of entering an invalid state that requires backtracking to previously selected entries.
Furthermore, when constructing an entry, we need only consider values at four previous entries (the entries directly to the left, directly above, and directly to the above-left, as well as the previous diagonal entry).
As such we have an efficient process to determine all \textbf{F}-matrices of a given size and so all ranked tree shapes on a given number of leaves.
Note it is a straightforward process to turn an \textbf{F}-matrix into an \textbf{E}-matrix using the equations in \eqref{eq.matrix_bijections_df_ed}, and to turn an \textbf{E}-matrix into a ranked tree shape.
This method of enumeration is illustrated for the $4\times4$ \textbf{F}-matrices in Figure \ref{fig:ex_enum1}, with the corresponding fully heterochronous ranked tree shapes drawn in Figure \ref{fig:ex_enum2}.

\begin{figure}[h!]
	\centering
	\begin{forest}
		for tree={
			draw,
			rounded corners,
			align=center,
			inner sep=2pt,
			s sep=12mm, 
			l sep=7mm  
		}
		[
		{$\begin{psmallmatrix}
				2 &  \\
				1 & *
			\end{psmallmatrix}$}
		[
		{$F^0 = \begin{psmallmatrix}
				2 & & &   \\
				1 & 1 & & \\
				0 & 0 & 2 & \\
				0 & 0 & 1 & 1 
			\end{psmallmatrix}$}
		]
		[
		{$\begin{psmallmatrix}
				2 & &  \\
				1 & 3 & \\
				* & & 
			\end{psmallmatrix}$}
		[{$F^1 = \begin{psmallmatrix}
				2 & & &   \\
				1 & 3 & & \\
				0 & 2 & 2 & \\
				0 & 1 & 1 & 1 
			\end{psmallmatrix}$}]
		[{$\begin{psmallmatrix}
				2 & & &   \\
				1 & 3 & & \\
				1 & 2 & 2 & \\
				* &  &   &   
			\end{psmallmatrix}$} 
		[{$F^2 = \begin{psmallmatrix}
				2 & & &   \\
				1 & 3 & & \\
				1 & 2 & 2 & \\
				0 & 1 & 1 & 1 
			\end{psmallmatrix}$}]
		[{$F^3 = \begin{psmallmatrix}
				2 & & &   \\
				1 & 3 & & \\
				1 & 2 & 2 & \\
				1 & 1 & 1 & 1 
			\end{psmallmatrix}$}]]
		]
		]
	\end{forest}
	\caption{The iterative construction of all $4 \times 4$ \textbf{F}-matrices. 
	The positions that admit two possible values are indicated by the $*$ sign, with the two downward edges representing each of the values.
	Intermediate steps for positions with a single possible value are omitted.
	The leaves ($F^0, F^1, F^2, F^3$) enumerate all $4 \times 4$ \textbf{F}-matrices.}
	\label{fig:ex_enum1}
\end{figure}
	
\begin{figure}[h!]
	\centering
	\includegraphics[width=0.8\textwidth]{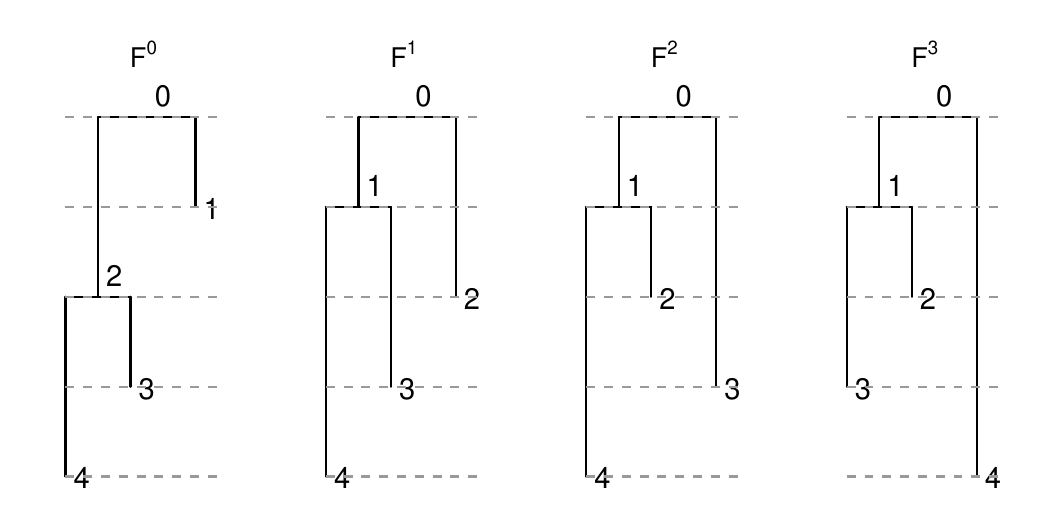}
	\caption{The $3$-leaf fully heterochronous ranked tree shapes corresponding to each of the $4 \times 4$ \textbf{F}-matrices.}
	\label{fig:ex_enum2}
\end{figure}

\section{Sampling Schemes} \label{section.enumeration}

A fully heterochronous ranked tree shape with $n$ leaves can be converted into an isochronous ranked tree shape with $2n$ leaves by attaching isochronous cherries to each of the leaves.
In this case, we will call such a tree a \textit{full-cherry tree}.
A \textit{cherry} is a pair of sister leaves, 
i.e.\ a subgraph with 3 nodes in which the root node has out-degree 2 and the other 2 nodes have out-degree 0.
Figure \ref{fig:transform} shows an example with $n=3$ leaves.
It is then evident that the space of heterochronous ranked tree shapes with $n$ leaves is
bijective with the subspace of isochronous ranked tree shapes with $2n$ leaves consisting of full-cherry trees.

\begin{figure}[h]
  \centering
  \includegraphics[width=0.8\linewidth]{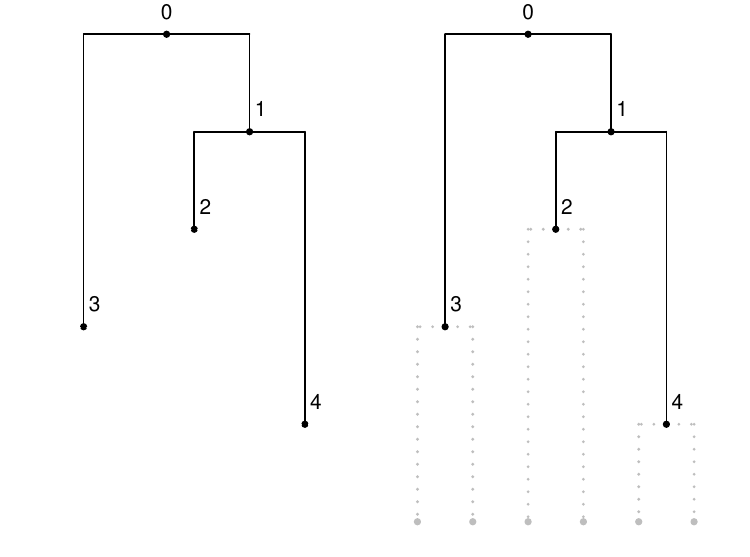}
  \caption{%
    A fully heterochronous ranked tree shape with 3 leaves (left) and  the corresponding full-cherry isochronous tree with 6 leaves and 3 cherries (right).
  }
  \label{fig:transform}
\end{figure}

Another consequence of the bijection between ranked tree shapes and full-cherry trees is that we can recursively count the number of fully heterochronous ranked tree shapes via a standard recursion involving root-splitting \citep[Chapter 2.1]{Steel-2016} to obtain the following proposition. 

\begin{prop}
  The number of fully heterochronous ranked tree shapes with $n$ leaves, $\left|\mathcal{T}^*_n\right|$, satisfies the following initial conditions and recursion,
  \begin{gather}\label{Gm_recursion}
    \left|\mathcal{T}^*_1\right| = \left|\mathcal{T}^*_2\right| = 1
    ,\qquad
	\left|\mathcal{T}^*_n\right| = \frac{1}{2} \sum_{\ell=1}^{n-1} \binom{2n-2}{2\ell-1} \left|\mathcal{T}^*_\ell\right| \left|\mathcal{T}^*_{n-\ell}\right|,
  \end{gather}
\end{prop}
\begin{proof}
  As the initial conditions are trivial, we assume $n\ge3$.
  To prove the recursion, we use that $\left|\mathcal{T}^*_n\right|$ is also the number of full-cherry trees with $2n$ leaves.
  The full-cherry trees with $2n$ leaves may be constructed as follows.
  Select two full-cherry trees $T_1$ and $T_2$, where $T_1$ has $2\ell$ leaves and $T_2$ has $2(n-\ell)$ leaves with $0<\ell<n$; extend the total orderings of the internal nodes of $T_1$ and $T_2$ to a common total ordering; join $T_1$ and $T_2$ with a new root node whose children are the roots of $T_1$ and $T_2$.
  Since $T_1$ has $2\ell-1$ internal nodes and $T_2$ has $2(n-\ell)-1$ internal nodes, there are exactly $\binom{2n-2}{2\ell-1}$ ways to extend to a common ordering.
  The sum in \eqref{Gm_recursion} corresponds to this construction, where the factor $\frac{1}{2}$  accounts for double counting due to constructing full-cherry trees from ordered pairs $(T_1,T_2)$ rather than unordered pairs $\{T_1,T_2\}$.
\end{proof}

We note that this recursion agrees with the recursion for strictly ordered binary trees by Poupard \cite{poupard1989}, although her proof does not involve full-cherry trees.
Indeed, Poupard's strictly ordered binary tree is a different name for the fully heterochronous ranked tree shape.
We restate the result to emphasize the value of working with a full-cherry tree, which will be the building block of the coalescent model introduced below. 
Using this recursion and an argument by generating functions, Poupard showed that the number of strictly ordered binary trees with $n$ leaves is equal to the $n^{th}$ reduced tangent number (see \eqref{eq.cardinality}). 
Table \ref{tab:cardinality} compares the cardinality of the space of isochronous ranked tree shapes and that of fully heterochronous ranked tree shapes for varying numbers of leaves. 

\begin{table}[h]
	\centering
	\scriptsize
	\begin{tabular}{l| cccccccccc}
		& \multicolumn{10}{c}{Number of leaves (n)}  \\		
		& $1$ & $2$ & $3$ & $4$ & $5$ & $6$ & $7$ & $8$ & $9$& $10$ \\
		\hline
		$\left|\mathcal{T}_n\right|$ & $1$ & $1$ & $1$ & $2$ & $5$ & $16$ & $61$ & $272$ & $1385$ & $7936$ \\
		$\left|\mathcal{T}^*_n\right|$ & $1$ & $1$ & $4$ & $34$ & $496$ &  $11056$ &  $349504$ & $14873104$ & $819786496$ & $56814228736$ \\ 
	\end{tabular}
	\caption{The number of isochronous and heterochronous ranked tree shapes with $n$ leaves, denoted respectively by $\left|\mathcal{T}_n\right|$ and $\left|\mathcal{T}^*_n\right|$.}
	\label{tab:cardinality}
\end{table}

In this section we introduce three methods for sampling fully heterochronous ranked tree shapes.
The first method is a coalescent model inspired by the bijection with full-cherry trees \citep[Proposition 2]{Palacios2015}.
This model is ``bottom-up'' in the sense that the generating process starts with one cherry node and adds cherries and merges cherries one by one until the root.
The second method, in contrast, is ``top-down'': it starts with the root and sequentially selects edges to bifurcate or to sample (terminate) as time moves forward.
This method utilizes the Catalan diagonal structure of the $\mathbf{F}$-matrix.
The last model generates one entry at a time sequentially along the $\mathbf{F}$-matrix via Bernoulli probabilities.
This last model can be specialized to a class of Beta-splitting models.

\subsection{Coalescent model} \label{subsection.coalescence_model}

The proposed coalescent model is a Markov chain whose full realization encodes a full-cherry tree, and therefore is an appropriate model for fully heterochronous ranked tree shapes (by removing the cherries at the end of the process).
The initial state is $2n$ leaves at the bottom of the tree.
We will describe the operation of connecting two nodes with a new node via two new edges as ``merging'' those two nodes.
The jump chain begins by forming a cherry, merging two leaves at a new node assigned rank $2n-2$.
To proceed, the chain introduces a new node and uniformly at random either merges two leaves or two non-leaf nodes at this new node.
The newly formed node is assigned a rank according to the time step when it was created, with older nodes assigned larger rank.
The $j$-th state of the chain is denoted by $A_{2n-j}=(L_{2n-j},V_{2n-j})$, where $L_{2n-j}$ denotes the number of nodes with total degree 0 (leaves not merged into cherries) and $V_{2n-j}$ denotes the set of ranks of non-leaf nodes with in-degree 0 (ranked nodes not merged) at step $j$.
The indices for states $A_{2n-j}$ run in reverse order compared to the steps $j$, which is standard for coalescent models.
By state $A_{2n-j}$, the Markov chain realizes a partially constructed full-cherry tree with nodes of ranks $2n-2$ to $2n-j$.
The chain starts at $A_{2n-1}=(2n,\emptyset)$ and completes after $2n-1$ steps at state $A_{0}=(0,\{0\})$ since the root is rank 0.
Figure \ref{fig:coal} shows an example.

With $k=2n-j$, the transition probability for state $j$ to $j+1$ is,
\begin{align*}
  &P(A_{k} \mid A_{k+1})
  \\
  &=
  \begin{cases}
    \dfrac{\binom{L_{k+1}}{2}}{\binom{L_{k+1}}{2}+\binom{\lvert V_{k+1}\rvert}{2}}
    &
    \text{if } L_{k}=L_{k+1}-2, V_{k+1}\subset V_k, \text{ and } \lvert V_{k}\setminus V_{k+1} \rvert =1,
    \\[2.5ex]
    \dfrac{1}{\binom{L_{k+1}}{2}+\binom{\lvert V_{k+1}\rvert}{2}}
    & \text{if } L_{k}=L_{k+1}, \lvert V_{k+1} \setminus V_{k}\rvert = 2,\text{ and } \lvert V_{k} \setminus V_{k+1} \rvert = 1,
    \\
    0 & \text{otherwise}.
  \end{cases}
\end{align*}

\begin{figure}[h!]
  \centering
  \includegraphics[width=0.8\textwidth]{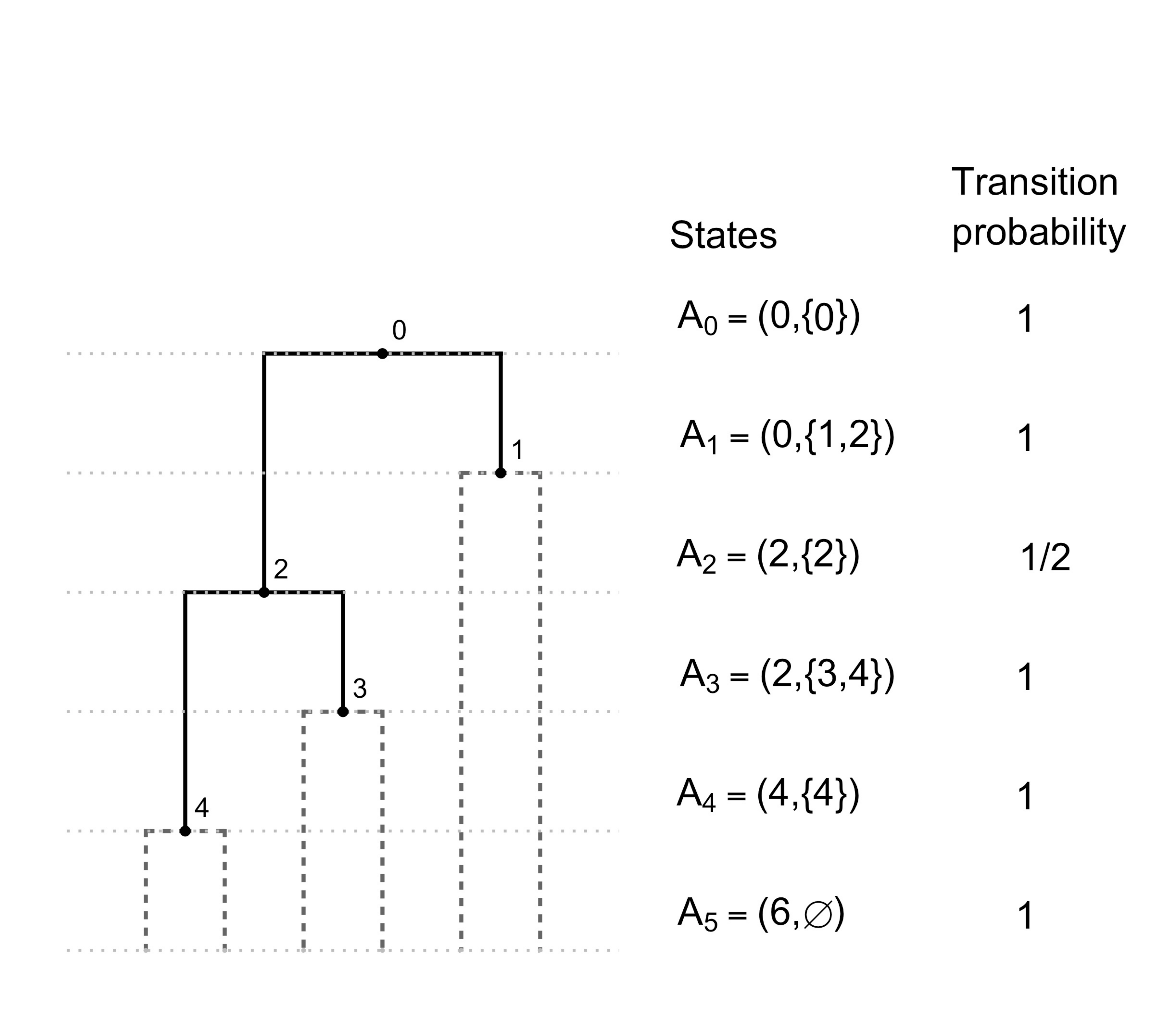}
  \caption{%
    An example of the coalescent jump chain with states $A_i$ and corresponding transition probabilities. The probability of the ranked tree shape is $\prod_{k=1}^3 P(A_k|A_{k+1}) = 1 \times \frac{1}{2} \times 1  = \frac{1}{2}$. 
  }
  \label{fig:coal}
\end{figure}

In order to compute the probability of a fully heterochronous ranked tree shape in terms of its $\mathbf{F}$-matrix $F$,
we need to determine the number of unmerged leaves and unmerged ranked nodes of the full-cherry tree from $F$.
For the number of unmerged ranked nodes, we have
\begin{gather*}
  |V_{2n-1}|=0, \qquad
  |V_k|=F_{k-1,k-1}\,\,\,\, \text{ for } 0<k<2n-1, \qquad
  V_{0}=1.
\end{gather*}
On the other hand, the total number of unmerged leaves and unmerged ranked nodes is $k+1$ at state $k$, and therefore the number of unmerged leaves is
\begin{gather*}
  L_{2n-1}=2n,\qquad
  L_{k} = k+1 - F_{k-1,k-1}\,\,\,\, \text{ for } 0<k<2n-1, \qquad
  L_0=0.
\end{gather*}

Therefore, the probability of a fully heterochronous ranked tree shape $T$ with $\mathbf{F}$-matrix $F$, under the coalescent model is:
\begin{equation}\label{eq.tree_prob_coal}
	\begin{aligned}
		P(T)
		&=
		\prod_{k=1}^{2n-3}  P(  A_k \mid A_{k+1} )
		\\
		&=
		\prod_{\substack{1\le k\le 2n-3,\\ F_{k,k}=F_{k-1,k-1}-1}} \binom{k+2-F_{k,k}}{2}
		\Bigg/
		{\displaystyle
			\prod_{0\le k\le 2n-3} \biggl\{\binom{k+2-F_{k,k}}{2} + \binom{F_{k,k}}{2}\biggl\}
		}
		.
	\end{aligned}
\end{equation}

\subsection{Diagonal ``top-down'' model} \label{subsection.diagonal_model}

A second model of fully heterochronous ranked tree shapes starts by uniformly generating the diagonal of the \textbf{F}-matrix (the sequence of coalescence and sampling events), and proceeds by uniformly at random selecting the edges for coalescence or sampling, conditioned on the matrix diagonal. To uniformly sample the diagonal, we rely on a bijection between the space of possible diagonal vectors and the space of Dyck paths from $(0,0)$ to point $(n-1,n-1)$.

\begin{figure}[h!]
  \centering
  \includegraphics[width=0.8\textwidth]{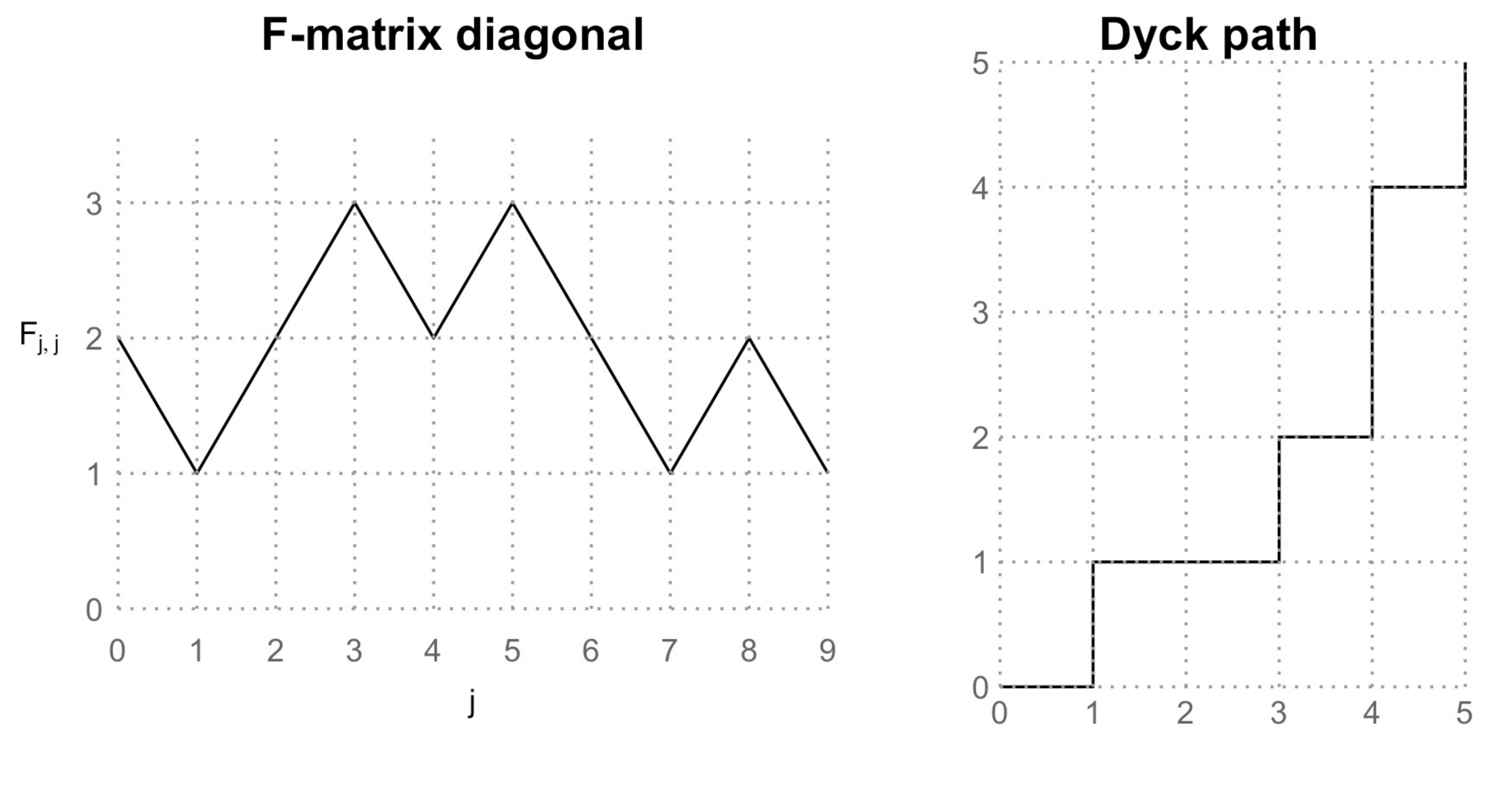}
  \caption{%
    An example of the \textbf{F}-matrix diagonal $[2, 1, 2, 3, 2, 3, 2, 1, 2, 1]$ and its corresponding Dyck path. Each unit decrease in the \textbf{F}-matrix diagonal is an upward step in the Dyck path and each unit increase is a rightward step.
  }
  \label{fig:diag_dyck}
\end{figure}

\begin{defn}
  A Dyck path is a path on the two-dimensional grid from point $(0,0)$ to point $(n-1,n-1)$ that can only move right or up by one unit, under the constraint that it never goes above the line $x=y$.
\end{defn}

\begin{prop} \label{prop:catalan}
  The number of possible diagonals in the \textbf{F}-matrix of a fully heterochronous ranked tree shape with $n$ leaves corresponds to the Catalan number $C_{n-1}$.
\end{prop}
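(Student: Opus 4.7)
The plan is to construct an explicit bijection between valid $\mathbf{F}$-matrix diagonals and Dyck paths from $(0,0)$ to $(n-1,n-1)$, which are well-known to be counted by $C_{n-1}$. By Theorem \ref{theorem.heterochronous_f_matrix_tree}\ref{item.positional}\ref{item.diagonal}, every such diagonal is a sequence of positive integers $f_0, f_1, \ldots, f_{2n-3}$ with $f_0 = 2$, $f_{2n-3} = 1$, and $f_i - f_{i-1} \in \{-1,+1\}$. By Corollary \ref{cor.totally_ranked_diagonals}, every sequence satisfying these constraints together with $f_i \le 2n-i-2$ occurs as the diagonal of some $\mathbf{F}$-matrix, and the upper bound is actually automatic: reaching $f_{2n-3}=1$ from $f_i$ in $2n-3-i$ steps of size $\pm 1$ forces $f_i \le 2n-2-i$. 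So the realizable diagonals are exactly the sequences described.

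For the bijection, I would introduce a phantom initial value $f_{-1} := 1$ and work with the $2n-2$ consecutive differences $\epsilon_k := f_k - f_{k-1}$ for $k = 0, 1, \ldots, 2n-3$. Since the net change from $f_{-1}$ to $f_{2n-3}$ is zero, exactly $n-1$ of the $\epsilon_k$ equal $+1$ and $n-1$ equal $-1$. Map each $+1$ (coalescence) to a rightward unit step and each $-1$ (sampling) to an upward unit step to obtain a lattice path from $(0,0)$ to $(n-1,n-1)$. The positivity $f_k \ge 1$ translates, after partial-summing the differences, into $(\#\text{rights among the first } k+1 \text{ steps}) - (\#\text{ups}) = f_k - 1 \ge 0$, so the path stays weakly below the line $x=y$ and is therefore a Dyck path. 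Invertibility is immediate: read off the step sequence and cumulate from $f_{-1}=1$ to recover the diagonal, whose positivity and endpoint values are guaranteed by the Dyck property.

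The main obstacle is mostly bookkeeping: there are $2n-2$ diagonal entries and also $2n-2$ steps in the target Dyck path, and the natural but incorrect reflex is to biject entries with steps. The fix is to biject \emph{differences} with steps, which requires the phantom $f_{-1}$; then the extremal values $f_{-1}=1$ and $f_{2n-3}=1$ ensure equal numbers of rightward and upward steps, and positivity cleanly translates to the weakly-below-diagonal condition. The figure convention (unit decrease = upward step, unit increase = rightward step) matches this setup. Once this is set up, the proof amounts to verifying the partial-sum computation and invoking the Catalan count $C_{n-1}$.
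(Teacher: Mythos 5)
Your proposal is correct and takes essentially the same route as the paper: both encode the diagonal's successive $\pm 1$ differences as rightward/upward steps of a Dyck path from $(0,0)$ to $(n-1,n-1)$, with the forced initial $+1$ (your phantom $f_{-1}=1$ corresponds to the paper's convention of starting the path at $(1,0)$), and then invoke the Catalan count. You are somewhat more careful than the paper in checking realizability via Corollary~\ref{cor.totally_ranked_diagonals} and in noting that the bound $f_i\le 2n-i-2$ is automatic for full-length diagonals, but the underlying bijection is identical.
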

\begin{proof}
  We first note that the diagonal of an \textbf{F}-matrix is equivalent to a Dyck path that starts at $(1,0)$ (corresponding to the initial 2 in the diagonal). Starting from the point $(1,0)$ in the Dyck path, if we record each rightward step as a $+1$ and each upward step as a $-1$, then we obtain a sequence of successive differences for a valid \textbf{F}-matrix diagonal that starts at $2$, ends at $1$, and takes only positive values. Hence the two spaces are bijective.

  It is well-known that the number of Dyck sequences of length $2n-2$ is the Catalan number $C_{n-1} = \frac{1}{n}\binom{2(n-1)}{n-1}$ \cite{Stanley-2012}. Therefore, the number of possible diagonals in the \textbf{F}-matrices of fully heterochronous trees with $n$ leaves is the Catalan number $C_{n-1}$.
\end{proof}

An algorithm to sample Dyck paths from $(1,0)$ to $(n-1,n-1)$ that has $O(n)$ complexity was proposed by \cite{Devroye1999}.
The algorithm proceeds sequentially starting from (1,0); at any point $(i,j)$ in the partially formed Dyck path, we move to the right with probability $N(i+1,j)/N(i,j)$, where
\begin{gather*}
  N(i,j) := \frac{i-j+1}{2n-1-i-j} \binom{2n-1-i-j}{n-j},
\end{gather*}
is the number of ways to complete the Dyck path from $(i,j)$ to $(n-1,n-1)$.
It is not hard to see that multiplying the transition probabilities results in a telescoping product equal to $1/N(0,1) = 1/C_{n-1}$.

Once the diagonal is sampled and fixed according to the previous algorithm, we have the order of bifurcation and sampling events of a tree.
For instance, if the diagonal is $[2, 3, 4, 3, 2, 1]$, then the sequence of successive differences is $[+1, +1, -1, -1, -1]$.
The tree has two bifurcations at times $u_1$ and $u_2$ and then three sampling events at $u_3$, $u_4$, and $u_5$.
Necessarily, $u_0$ is a bifurcation event and $u_6$ is a sampling event, so they are not included.

Next we need to sample the edges on which these events happen.
Generally at time $u_k$, with $1\le k\le 2n-2$, we have a partially constructed tree, and its corresponding partial \textbf{F}-matrix has $k$ complete rows.
We then choose an edge from the set of $F_{k-1,k-1}$ edges extant throughout $(u_k, u_{k-1})$ to be sampled or bifurcated at time $u_k$.
We label these extant edges with the rank of their parent node.
The number of such edges that descend from the node of rank $j$, with $j\le k-1$, is $D_{k-1,j} = F_{k-1,j}-F_{k-1,j-1}$.
Thus the probability of choosing an edge with rank label $L$, with $L\le k-1$, is
\begin{equation} \label{eq.prob_events_diag_samp}
  \frac{F_{k-1,L} - F_{k-1,L-1}}{F_{k-1,k-1}},
\end{equation}
If the chosen edge has label $L$, then the next row of the \textbf{F}-matrix (excluding diagonal) is given by
\[
  F_{k,j} =
  \begin{cases}
    F_{k-1,j} & j< L, \\
    F_{k-1,j} -1 & j \geq L.
  \end{cases}
\]

\begin{figure}[h!]
  \centering
  \includegraphics[width = 0.4\textwidth]{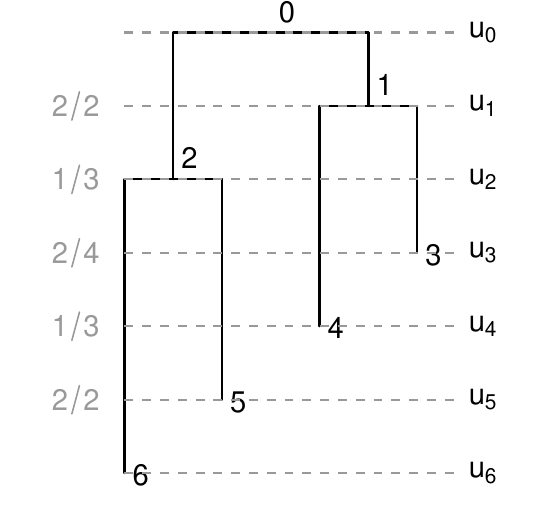}
  \caption{%
    An example of ``top-down'' sampled tree with fixed diagonal $[2, 3, 4, 3, 2, 1]$.
    Gray: probability of each bifurcation or sampling event that happens at respective times $u_1, \cdots, u_5$ according to \eqref{eq.prob_events_diag_samp}.
    The probability of the ranked tree shape is $\frac{2}{2} \times \frac{1}{3} \times \frac{2}{4} \times \frac{1}{3} \times \frac{2}{2} = \frac{1}{18}$.
  }
  \label{fig:ex_diag_samp}
\end{figure}

Continuing with the previous example of an $\mathbf{F}$-matrix with fixed diagonal $[2, 3, 4, 3, 2, 1]$, we can enumerate all $18$ compatible fully heterochronous ranked tree shapes (according to Proposition \ref{prop.extend_sequences}). Given that we randomly choose extant edges to sample or bifurcate, we would expect all \textbf{F}-matrices to be equally likely. We show that this is indeed the case.

\begin{prop} \label{prop.cond_uniform}
  The \textbf{F}-matrices conditioned on a fixed diagonal are uniformly distributed under the diagonal top-down model.
\end{prop}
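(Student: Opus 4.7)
\noindent
The plan is to show that every $\mathbf{F}$-matrix compatible with a fixed diagonal has the same probability under the top-down edge-sampling process, by regrouping the product that defines this probability according to internal node rather than time step. Writing $L_k$ for the rank of the parent of the node of rank $k$ (uniquely determined by $F$), the transition rule \eqref{eq.prob_events_diag_samp} immediately gives
\[
P(T \mid \mathrm{diag}(F)) \;=\; \prod_{k=1}^{2n-3} \frac{D_{k-1,L_k}}{F_{k-1,k-1}},
\]
with $D_{i,j} = F_{i,j} - F_{i,j-1}$. The denominator depends only on the diagonal, so uniformity reduces to showing that $\prod_{k=1}^{2n-3} D_{k-1,L_k}$ takes the same value for every $F$ with the given diagonal.

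\medskip
\noindent
The next step is to reindex this product by internal node. Each internal node $j$ of $T$ has two children; let their ranks be $c_1 < c_2$. The indices $k$ at which $L_k = j$ are exactly $c_1$ and $c_2$. Interpreting $D_{k-1,j}$ as the number of edges still descending from $j$ just before the event at $u_k$, one obtains $D_{c_1-1,j} = 2$ (both of $j$'s outgoing ``slots'' are still unassigned) and $D_{c_2-1,j} = 1$ (the rank-$c_1$ node has filled one slot). So $j$ contributes the factor $2 \cdot 1 = 2$ to the product, provided both $c_1$ and $c_2$ lie in $\{1,\ldots,2n-3\}$.

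\medskip
\noindent
The one boundary case is the unique internal node $j^*$ whose second child is the last leaf, so that $c_2 = 2n-2$ and the step at $k=c_2$ falls outside the product range. For $j^*$ only the $c_1$-factor remains, and it is still $D_{c_1-1,j^*} = 2$, so $j^*$ contributes a factor of $2$ as well. Since $T$ has $n-1$ internal nodes, we conclude
\[
\prod_{k=1}^{2n-3} D_{k-1,L_k} \;=\; 2^{n-1},
\]
independent of $T$. Hence $P(T \mid \mathrm{diag}(F)) = 2^{n-1} \big/ \prod_{k=1}^{2n-3} F_{k-1,k-1}$ for every compatible $T$, which is the uniformity claim. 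The only real obstacle is tracking which parent--child steps fall inside the index range $\{1,\ldots,2n-3\}$; once the $j^*$ case is isolated, the rest is direct regrouping.
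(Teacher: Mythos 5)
Your proof is correct and takes essentially the same route as the paper's: both reduce the conditional probability to $\bigl(\prod_k D_{k,j_k}\bigr)\big/\bigl(\prod_k F_{k,k}\bigr)$ and show the numerator equals $2^{n-1}$ independently of the tree. The paper phrases the counting as ``exactly half of the $2n-2$ non-root nodes have a larger-ranked sibling,'' while you group the factors by parent node (each internal node contributing $2\cdot 1$, with the parent of the max-rank node contributing only its factor of $2$); these are the same combinatorial observation.
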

\begin{proof}
  By \eqref{eq.prob_events_diag_samp}, the conditional probability is expressed in terms of $\mathbf{D}$ and $\mathbf{E}$-matrices as follows:
  \begin{gather*}
    P\left(F \mid \{F_{k,k}\}_{k=0}^{2n-3}\right)
    =
    \prod_{k=0}^{2n-3}\frac{F_{k,j_{k}} - F_{k,j_{k}-1} }{ F_{k,k}}
    =
    \frac{\prod_{k=0}^{2n-3}    D_{k,j_k}  }{\prod_{k=0}^{2n-3} F_{k,k}}
    ,
  \end{gather*}
  where $j_k = \argmax_{0\le j\le k} E_{k,j}$.
  Note $j_k$ is exactly the rank label of the edge selected for bifurcation or sampling at the event time $u_{k+1}$.
  The numerator $\prod_{k=0}^{2n-3}  D_{k,j_k}$ is the product of \textbf{D}-matrix entries that take values in $\{1,2\}$.
  In particular, $D_{k,j_k}$ is 2 when the sibling of the node of rank $k+1$ has rank larger than $k+1$, and is 1 when the sibling has rank smaller than $k+1$.
  Of all $2n-2$ non-root nodes, exactly half have rank larger than their sibling, so we have,

  \[
    P\left(F \mid \{F_{k,k}\}_{k=0}^{2n-3}\right) = \frac{2^{n-1 }}{\prod_{k=0}^{2n-3} F_{k,k}}.
  \]
  Notice that $P(F\mid \{F_{k,k}\}_{k=0}^{2n-3} )$ depends solely on the fixed diagonal $\{F_{k,k}\}_{k=0}^{2n-3}$, so all fully heterochronous ranked tree shapes with the same diagonal have the same probability. This concludes the proof.
\end{proof}

To summarize, the following proposition gives the probability of any fully heterochronous ranked tree shape under the diagonal top-down model.
\begin{prop} \label{prop.tree_prob_diag_samp}
  Under the diagonal top-down model, the probability of a fully heterochronous ranked tree shape with $\mathbf{F}$-matrix F is given by:
  \begin{equation} \label{eq.tree_prob_diag_samp}
    \begin{aligned}
      P(F) &= P\left(\{F_{k,k}\}_{k=0}^{2n-3}\right) P\left(F \mid \{F_{k,k}\}_{k=0}^{2n-3}\right) \\
      &= \frac{1}{C_{n-1}} \times
      \frac{2^{n-1}}{ \prod_{j=1}^{2n-2} F_{j-1,j-1}}.
    \end{aligned}
  \end{equation}
\end{prop}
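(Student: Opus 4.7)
The plan is to observe that Proposition \ref{prop.tree_prob_diag_samp} follows almost immediately from a chain rule decomposition together with the two results that precede it. I would first write
\begin{gather*}
P(F) \;=\; P\!\left(\{F_{k,k}\}_{k=0}^{2n-3}\right)\,\cdot\,P\!\left(F \,\middle|\, \{F_{k,k}\}_{k=0}^{2n-3}\right),
\end{gather*}
which is valid because the diagonal top-down model is defined as a two-stage procedure: first sample the diagonal, then fill in the remaining entries conditional on that diagonal.

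For the first factor, I would invoke Proposition \ref{prop:catalan}, which identifies diagonals of valid \textbf{F}-matrices with Dyck paths ending at $(n-1,n-1)$, and then note that the Devroye sampler described just after it assigns to the target path a telescoping product of transition probabilities whose value is $1/N(0,1) = 1/C_{n-1}$, independent of the particular path. Thus the diagonal is drawn uniformly from the $C_{n-1}$ valid diagonals, giving $P(\{F_{k,k}\}_{k=0}^{2n-3}) = 1/C_{n-1}$. For the second factor, Proposition \ref{prop.cond_uniform} already establishes
\begin{gather*}
P\!\left(F \,\middle|\, \{F_{k,k}\}_{k=0}^{2n-3}\right) \;=\; \frac{2^{n-1}}{\prod_{k=0}^{2n-3} F_{k,k}}.
\end{gather*}

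Multiplying the two factors and reindexing $k \mapsto j-1$ to match the product $\prod_{j=1}^{2n-2} F_{j-1,j-1}$ in the statement gives the claimed formula \eqref{eq.tree_prob_diag_samp}. The only real content beyond citation is the verification that Devroye's scheme produces a uniform distribution over Dyck paths, but since the paper already states that the transition probabilities telescope to $1/C_{n-1}$, this reduces to a one-line remark. There is no genuine obstacle in this proof; it is a corollary assembling Proposition \ref{prop:catalan}, the telescoping property of the Devroye sampler, and Proposition \ref{prop.cond_uniform}.
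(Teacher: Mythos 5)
Your proposal is correct and matches the paper's intent exactly: the paper states this proposition as a summary with no separate proof, precisely because it is the product of the uniform diagonal probability $1/C_{n-1}$ (via the telescoping Devroye sampler over the $C_{n-1}$ Dyck paths of Proposition~\ref{prop:catalan}) and the conditional probability $2^{n-1}/\prod_{k=0}^{2n-3}F_{k,k}$ from Proposition~\ref{prop.cond_uniform}. Your reindexing $k\mapsto j-1$ to obtain $\prod_{j=1}^{2n-2}F_{j-1,j-1}$ is also correct, so nothing is missing.
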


\subsection{A Bernoulli splitting model} \label{subsection.bernoulli_model}

We now define a family of probability distributions on $\mathbf{F}$-matrices that sequentially generates one entry at a time conditioned on all previous values. Since each entry, conditioned on previous entries, can take up to two different values (see Theorem~\ref{theorem.heterochronous_f_matrix_tree}, constraint 2), these values can be sampled according to Bernoulli probabilities, except in trivial cases where the entry $F_{i,j}$ can take only a single value.

We further note that in determining valid values, we need at most four previous values rather than all previous values.
For a given non-trivial entry $F_{i,j}$, let $\mathcal{F}_{i,j} \mid  F_{<(i,j)}$ denote the set of possible values that $F_{i,j}$ can take conditionally of previous values, then
given real numbers $p_{i,j}\in(0,1)$, entry $F_{i,j}$ is $L_F(i,j)=\min(\mathcal{F}_{i,j}\mid F_{<(i,j)})$ with probability $p_{i,j}$ and $U_F(i,j)=\max(\mathcal{F}_{i,j}\mid F_{<(i,j)})$ with probability $1-p_{i,j}$.
We set
\begin{multline*}
  P(F_{i,j} \mid F_{<(i,j)}, p_{i,j})
  \\
  =
  \delta_{L_F(i,j)=U_F(i,j)}
  +
  (1-\delta_{L_F(i,j)=U_F(i,j)}) \, p_{i,j}^{\delta_{F_{i,j} = L_{F}(i,j)}}(1-p_{i,j})^{\delta_{F_{i,j}=U_F(i,j)}}, 
\end{multline*}
where $\delta$ is an indicator function. 

Let $\textbf{p}$ denote the set of Bernoulli probabilities $p_{i,j}$ with $(i,j)$ ranging over the non-trivial entries. The joint probability of an \textbf{F}-matrix conveniently telescopes as
\begin{gather*}
  P(F \mid \textbf{p})
  =
  \prod_{(i,j) \text{ non-trivial}} P(F_{i,j} \mid F_{<(i,j)}, p_{i,j})
  .
\end{gather*}

\begin{example}
  Figure \ref{fig:ex_enum1} shows the four \textbf{F}-matrices for fully heterochronous ranked tree shapes with three leaves, and marks (with $*$ sign) the three non-trivial positions. Recall the matrices are
  \begin{align*}
    F^0 =
    \begin{psmallmatrix}
      2 & 0 & 0 & 0\\
      1 & 1 & 0 & 0\\
      0 & 0 & 2 & 0\\
      0 & 0 & 1 & 1
    \end{psmallmatrix},
    F^1 =
    \begin{psmallmatrix}
      2 & 0 & 0 & 0\\
      1 & 3 & 0 & 0\\
      0 & 2 & 2 & 0\\
      0 & 1 & 1 & 1
    \end{psmallmatrix},
    F^2 =
    \begin{psmallmatrix}
      2 & 0 & 0 & 0\\
      1 & 3 & 0 & 0\\
      1 & 2 & 2 & 0\\
      0 & 1 & 1 & 1
    \end{psmallmatrix},
    F^3 =
    \begin{psmallmatrix}
      2 & 0 & 0 & 0\\
      1 & 3 & 0 & 0\\
      1 & 2 & 2 & 0\\
      1 & 1 & 1 & 1
    \end{psmallmatrix}.
  \end{align*}
  The probability of a matrix is:
  \begin{gather*}
    P(F \mid \textbf{p})
    =
    P(F_{1,1} \mid \textbf{p})
    P(F_{2,0} \mid F_{1, 1},\textbf{p})
    P(F_{3,0} \mid F_{2, 0},\textbf{p})
    P(F_{3,1} \mid F_{2, 0},F_{3,0},\textbf{p})
    .
  \end{gather*}
  By taking probabilities $p_{1,1}, p_{2,0}, p_{3,0}, p_{3,1}$, we find that
  \begin{align*}
    P(F^0\mid \textbf{p}) &= p_{1,1}
    ,
    &P(F^1\mid \textbf{p}) &= (1-p_{1,1})p_{2,0}
    ,\\
    P(F^2\mid \textbf{p}) &= (1-p_{1,1})(1-p_{2,0})p_{3,0}
    ,
    &P(F^3\mid \textbf{p}) &= (1-p_{1,1})(1-p_{2,0})(1-p_{3,0}),
  \end{align*}
  and $P(F^0\mid \textbf{p})+P(F^1\mid \textbf{p})+P(F^2\mid \textbf{p})+P(F^3\mid \textbf{p})=1$.
\end{example}

In the example above, the number of parameters $p_{i,j}$ and number of \textbf{F}-matrices are equal.
Additionally, the parameter $p_{3,1}$ is unnecessary.
These strange details are specific to the small number of leaves.
In general the number of parameters is much smaller, as the number of parameters is quadratic in $n$, whereas the number of matrices is comparable to $n^n$.

We highlight that the Bernoulli splitting model applies to the space of isochronous ranked tree shapes as well. For the isochronous ranked tree shapes, the diagonal and subdiagonal are fixed, and the remaining entries are chosen by a Bernoulli coin flip. Hence, the number of non-trivial entries is $(n-3)(n-2)/2$ in the isochronous case, compared to $2n^2-5n+1$ in the heterochronous case, where, as usual, $n$ is the number of leaves in the ranked tree shape.

Inspired by the Beta-splitting model \cite{aldous1996probability,sainudiin2016beta}, we can sample the Bernoulli probabilities from a Beta density $f(p_{i,j};\alpha,\beta)$, with parameters $\alpha$ and $\beta \in (0,\infty)$. The entry-wise probability is
\begin{align*}
  &P(F_{i,j} \mid F_{<(i,j)})
  \\
  &=
  \delta_{L_F(i,j)=U_F(i,j)}
  \\&\quad
  +
  (1-\delta_{L_F(i,j)=U_F(i,j)})\int^{1}_{0}p_{i,j}^{\delta_{F_{i,j} = L_{F}(i,j)}}(1-p_{i,j})^{\delta_{F_{i,j}=U_F(i,j)}} f(p_{i,j};\alpha,\beta)dp_{i,j}
  \\
  &=
  \delta_{L_F(i,j)=U_F(i,j)}
  \\&\quad
  +
  (1-\delta_{L_F(i,j)=U_F(i,j)})
  \frac{\text{B}(\alpha+\delta_{F_{i,j} = L_{F}(i,j)},\beta+ \delta_{F_{i,j} = L_{F}(i,j)})}{B(\alpha, \beta)} \\ 
  &= 
  \begin{cases}
  	1  & \text{if } L_F(i,j)=U_F(i,j)
  	,\\
  	\frac{\alpha}{\alpha+\beta} & \text{if } L_F(i,j)=F_{i,j} < U_F(i,j)
  	,\\ 
  	\frac{\beta}{\alpha+\beta} & \text{if }  L_F(i,j)< F_{i,j} = U_F(i,j)
  	.
  \end{cases} 
\end{align*}

Notice that the above equation implies that the marginal distribution of fully heterochronous ranked tree shapes has one parameter, which is the ratio $\frac{\alpha}{\beta}$.
This, however, should not be confused with the generative process, which is nonparametric.
In the generative model, we sample trees from the condition distribution of $P(F\mid \mathbf{p})$, where the number of parameters in $\mathbf{p}$ grows with tree size.

This Beta-Bernoulli model generates unbalanced trees when $\alpha\gg \beta$ and balanced otherwise. 
The mean of Beta($\alpha,\beta$) is $\frac{\alpha}{\alpha + \beta}$, so when $\alpha \gg \beta$, the Bernoulli probabilities of choosing the smaller admissible value $L_F$ (the probabilities of sampling as opposed to branching) tends to be higher.
Consequently, there are typically few extant lineages at any given time, which favors caterpillar-like shapes and yields unbalanced trees. 
This is seen in simulations, with the total tree length indicating how long and how many branches persist before sampling (Figure \ref{fig: sim_hist})

\begin{example}\label{ex.compare_tree_prob}
	To illustrate the differences between the three models introduced in this Section, we compute the probability of the ranked tree shape $T$  in Figure \ref{fig:coal} under each the three models. The tree has \textbf{F}-matrix,	
	\[
	F =
	\begin{pmatrix}
		2 & 0 & 0 & 0\\
		1 & 1 & 0 & 0\\
		0 & 0 & 2 & 0\\
		0 & 0 & 1 & 1
	\end{pmatrix}.
	\]
	By \eqref{eq.tree_prob_coal}, the probability under the Coalescent model is $P_{\text{coalescent}}(T) = \frac{1}{2}$, which agrees with the direct calculations in Figure \ref{fig:coal}. 
	Its probability under the Diagonal ``top-down'' model, according to Proposition \ref{prop.tree_prob_diag_samp}, is 
	\[P_{\text{diagonal}}(T) = \frac{2^2}{C_2 \times 2\times 1\times2\times1} = \frac{1}{2}.\]
	Lastly, for fixed Beta parameters $\alpha$ and $\beta$, the marginal tree probability is 
	\[P_{\text{Beta-Bernoulli}}(T) = \prod_{\text{nontrivial } (i,j)}P(F_{i,j} | F_{<(i,j)}) = P(F_{1,1} | F_{<(1,1)})   = \frac{\alpha}{\alpha+\beta} .\]
\end{example}

\subsection{Simulations}

We simulated $1000$ fully heterochronous ranked tree shapes with 5, 20, and 50 leaves, according to the three models defined in the previous sections.
For the Bernoulli splitting model, we simulated trees from three different Beta distributions: (1) $\alpha=10, \beta=1$, (2) $\alpha=10, \beta=10$, and (3) $\alpha=1, \beta=10$.

For each simulated tree, we computed 3 statistics:
the number of cherries, the total tree length (the sum of the number of $(u_i, u_{i+1})$ intervals that each branch survives), and the internal tree length (the sum of the number of $(u_i, u_{i+1})$ intervals that each internal edge survives).
Since our models are models on tree topology only, we assumed a unit length interval between consecutive events (branching or sampling).
The means of those statistics are presented in Tables~\ref{tab:sim.coales_diag} and \ref{tab:sim.bernoulli}.
Empirical distributions based on 1000 simulations of trees with 20 leaves are depicted in Figure~\ref{fig: sim_hist}.

\begin{table}[h]
  \centering
  \scriptsize
  \begin{tabular}{l| rrr| rrr}
    & \multicolumn{3}{c|}{coalescent} & \multicolumn{3}{c}{diagonal top-down} \\
    \hline
    $n$ & $5$ & $20$ & $50$ & $5$ & $20$ & $50$ \\
    \hline
    $N_C$ & 1.50 & 6.10 & 15.29 & 1.43 & 5.14 & 12.57 \\
    $L_{I}$ & 5.13 & 82.92 & 504.92 & 5.63 & 66.99 & 284.41 \\
    $L_{T}$ & 19.14 & 289.50 & 1787.50 & 17.28 & 155.04 & 617.56 \\
    \hline
  \end{tabular}
  \caption{%
    Comparing the average number of cherries $N_C$, the average internal tree length $L_I$, and the average total length $L_T$  of size-1000 samples of fully heterochronous ranked tree shapes (number of leaves $n=$ 5, 20, 50) from the coalescent model and the diagonal top-down model.
  }
  \label{tab:sim.coales_diag}
\end{table}

\begin{table}[h]
  \centering
  \scriptsize
  \begin{tabular}{l| rrr| rrr| rrr}
    & \multicolumn{3}{c|}{$\alpha=10,\beta=1$} & \multicolumn{3}{c|}{$\alpha=10,\beta=10$}& \multicolumn{3}{c}{$\alpha=1,\beta=10$} \\
    \hline
    $n$ & $5$ & $20$ & $50$ & $5$ & $20$ & $50$ & $5$ & $20$ & $50$ \\
    \hline
    $N_C$ & 1.01 & 1.08 & 1.44 & 1.39 & 5.12 & 12.20 & 1.37 & 6.27 & 19.12 \\
    $L_I$ & 5.97 & 36.01 & 96.48 & 5.37 & 61.16 & 264.61 & 3.64 & 62.43 & 521.16 \\
    $L_T$ & 12.45 & 59.97 & 157.09 & 17.18 & 157.92 & 607.47 & 23.30 & 379.57 & 2322.72 \\
    \hline
  \end{tabular}
  \caption{%
    Comparing the average number of cherries $N_C$, the average internal tree length $L_I$, and the average total length $L_T$ of size-1000 samples of fully heterochronous ranked tree shapes (number of leaves $n=$ 5, 20, 50) from the Bernoulli splitting model with different parameters for the Beta distribution ($\alpha = 10, \beta=1$; $\alpha = 10, \beta=10$; $\alpha = 1, \beta=10$).
  }
  \label{tab:sim.bernoulli}
\end{table}

Results from Tables \ref{tab:sim.coales_diag} and \ref{tab:sim.bernoulli} show that
among the two parameter-free models, the coalescent model generates samples with larger average internal length, total length, and number of cherries compared to the diagonal top-down model. On the other hand, by adjusting the hyperparameters of the beta distribution in the Bernoulli splitting model, the resulting sample can be quite different in terms of the three average statistics.

\begin{figure}[h!]
  \centerline{\includegraphics[width=1.25\textwidth]{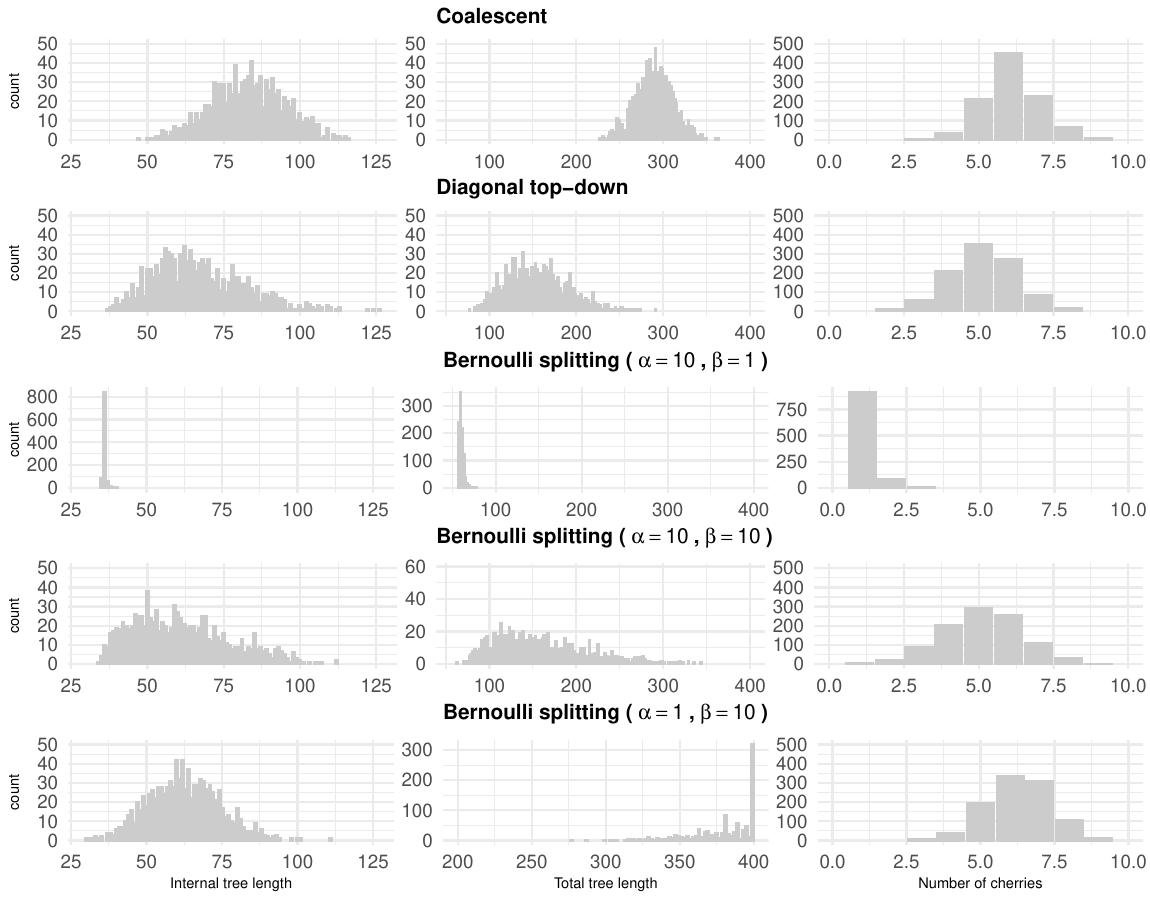}}
  \caption{%
    \small{Comparing sampling distributions of internal tree length, total tree length, and number of cherries for 1000 fully heterochronous ranked tree shapes with 20 leaves under the coalescent model, the diagonal top-down model, and the Bernoulli splitting model (with parameters $\alpha = 10, \beta=1$; $\alpha = 10, \beta=10$; $\alpha = 1, \beta=10$).}
  }
  \label{fig: sim_hist}
\end{figure}

The histograms in Figure \ref{fig: sim_hist} allow us to see the differences between samples more clearly.
The sampling distributions of summary statistics from the coalescent model and diagonal top-down model are roughly symmetric.
In contrast, the Bernoulli splitting model, regardless of hyperparameter values, produces more skewed distributions for total tree length.
The sampling distributions of the number of cherries are approximately symmetrical across all models.
As the ratio $\frac{\alpha}{\beta}$ decreases, the mode of the sampling distribution of total tree length increases, and the distributions change from being right-skewed to being left-skewed.
These simulations show that, even after we simplify the Bernoulli splitting model (so that the Bernoulli probabilities come from a Beta distribution), the model is very expressive in the sense that it can generate very different samples of trees.
We can thus reasonably conclude that, by adjusting the Bernoulli probabilities, the Bernoulli model can fit to various distributions.

\subsection{Implementation}\label{section.implementation}

Software implementing these methods is available in R and Python at \url{https://github.com/matsengrp/fully-heterochronous-f-matrix}.
It generates all \textbf{F}-matrices for trees of a given size, converts between \textbf{F}-matrix, \textbf{D}-matrix, and \textbf{E}-matrix formats, and validates ranked tree structures.
The enumeration algorithms use the characterizations from Section~\ref{section.theorems}.
The sampling implementations (Section~\ref{section.enumeration}) use the autoregressive structure of \textbf{F}-matrix construction.

\section{Discussion}\label{section.discussion}





In this article, we extended theorems describing \textbf{F}-matrices to fully heterochronous ranked tree shapes.
Using the \textbf{F}-matrix characterization, we were able to enumerate all fully heterochronous ranked tree shapes, and we highlighted our ability to construct \textbf{F}-matrices in an autoregressive order.
This construction allowed us to define a flexible family of probability distributions, with a large number of parameters, on the space of fully heterochronous ranked tree shapes. In addition, we introduced two parameter-free distributions that can serve as null distributions, which involve some uniform sampling at stages of tree formation. We then compared the flexible family of distributions against the two null distributions. Through simulations we showcased the ability of our flexible family to fit various and expressive distributions.

Note that we can attach the flexible family of probability distributions to isochronous ranked tree shapes. Additionally, the methods used here to characterize \textbf{F}-matrices for fully heterochronous ranked tree shapes can be applied to (non-fully) heterochronous ranked tree shapes with a fixed number of unique leaf sampling times. Likely one would need only to adjust the size of the matrix and conditions on the diagonal to account for the number of unique ranks. Then, we are equipped with representations and probability distributions on the entire space of ranked tree shapes.

In a future article, we will describe how to implement flexible probability distributions via neural networks.
Our goal is to model the distribution of tree shapes for B cell receptor sequences.
The present work has built a solid foundation: the probability associated with an entry of an \textbf{F}-matrix is written in terms of the probabilities of at most four previous entries.
This lends itself to an efficient autoregressive model.

\section*{Acknowledgments}

F.A.M.\ would like to thank Thierry Mora for discussions about the need for flexible models of tree shape that motivated the work in this paper.

This material is based upon work supported by the National Science Foundation under Grant No. DMS-1929284 while the senior authors met at the Institute for Computational and Experimental Research in Mathematics in Providence, RI, during the ``Algorithmic Advances and Implementation Challenges: Developing Practical Tools for Phylogenetic Inference'' program.

\section*{Funding}
J.A.P.\ acknowledges support from the NSF Career Award \#2143242 and NIH Award R35GM148338.
F.A.M.\ acknowledges support from NIAID award R01-AI146028.
Scientific Computing Infrastructure at Fred Hutch funded by ORIP grant S10OD028685.
Frederick Matsen is an investigator of the Howard Hughes Medical Institute.

\bibliographystyle{unsrt}
\bibliography{main.bib}

\end{document}